\let\arxiv1

\if\arxiv1 
    \pdfoutput=1 
\fi

\documentclass[a4paper,UKenglish,cleveref,autoref,thm-restate]{lipics-v2021}

\usepackage[utf8]{inputenc}

\if\arxiv1
    \hideLIPIcs  
\fi

\title{GKAT with Hoare Hypotheses}

\author{Jurriaan Rot}{Radboud University, Nijmegen, The Netherlands}{jurriaan.rot@ru.nl}{https://orcid.org/0000-0002-1404-6232}{}

\author{Todd Schmid}{Bucknell University, Lewisburg, USA}{t.schmid@bucknell.edu}{https://orcid.org/0000-0002-9838-2363}{}

\author{Jana Wagemaker}{Radboud University, Nijmegen, The Netherlands}{jana.wagemaker@ru.nl}{https://orcid.org/0000-0002-8616-3905}{}

\authorrunning{J. Rot, T. Schmid, and J. Wagemaker}

\Copyright{Jurriaan Rot, Todd Schmid, and Jana Wagemaker} 

\ccsdesc[500]{Theory of computation~Semantics and reasoning}
\ccsdesc[500]{Theory of computation~Formal languages and automata theory}

\keywords{Kleene Algebra, GKAT, Hypotheses} 

\relatedversion{} 

\funding{This research was partially supported by the Dutch research council (NWO) under grant no.\ VI.Veni.242.134 (VerHyp).}

\nolinenumbers 

\EventEditors{Ana Sokolova and Patrick Totzke}
\EventNoEds{2}
\EventLongTitle{37th International Conference on Concurrency Theory (CONCUR 2026)}
\EventShortTitle{CONCUR 2026}
\EventAcronym{CONCUR}
\EventYear{2026}
\EventDate{September 1--4, 2026}
\EventLocation{Liverpool, UK}
\EventLogo{}
\SeriesVolume{391}
\ArticleNo{7}

\usepackage[dvipsnames]{xcolor}
\usepackage{amsmath,amsthm,amssymb}
\usepackage{
    mathpartir, 
    enumerate,
	thmtools
}
\usepackage{hyperref}
\hypersetup{
    colorlinks=true,
    linkcolor=black,
    filecolor=blue,      
    urlcolor=blue,
    citecolor=ForestGreen
    }

\newcommand{\BExp}{\mathit{BExp}}
\newcommand{\GExp}{\mathit{GExp}}
\newcommand{\At}{\mathit{At}}
\newcommand{\GL}{\mathit{GL}}

\newcommand{\BA}{\mathsf{BA}}

\renewcommand{\L}{{\mathcal L}}

\newcommand{\purple}[1]{\mathcolor{RedViolet}{#1}}
\newcommand{\blue}[1]{\mathcolor{Blue}{#1}}

\newcommand{\IF}[3]{\mathtt{\purple{if}}~{\TEST{#1}}~\mathtt{\purple{then}}~{#2}~\mathtt{\purple{else}}~{#3}}
\newcommand{\WHILE}[2]{
    \mathtt{\purple{while}}~{\TEST{#1}}
    ~\mathtt{\purple{do}}~{#2}
}
\newcommand{\ASSERT}[1]{\mathtt{\purple{assert}}~\TEST{#1}}

\newcommand{\TEST}[1]{\blue{#1}}

\newcommand{\GKAT}{\mathsf{GKAT}}
\newcommand{\transl}[1]{T_{#1}}

\newcommand{\tr}[1]{
    \raisebox{-1pt}{
        \(\xrightarrow{#1}\)
    }
}

\newcommand{\enq}{\mathsf{enq}}

\newcommand{\len}{\mathsf{len}}

\newcommand{\insp}{\mathsf{insp}}
\newcommand{\NN}{\mathbb{N}}
\newcommand{\Queue}{\mathcal{Q}}

\newcommand{\bigO}{\mathcal{O}}

\allowdisplaybreaks

\begin{document}

\maketitle

\begin{abstract}
    Guarded Kleene Algebra with Tests (GKAT) is a variant of Kleene algebra which allows for reasoning about simple imperative programs,
	and which features a decision procedure for program equivalence in nearly linear time. 
	In the current paper, we address the challenge of reasoning under assumptions about these programs.
	In particular, we develop a form of \emph{Hoare hypotheses}, which allow modelling basic domain knowledge on pre- and postconditions of uninterpreted basic programs, and which are well-developed for classical Kleene algebra but not yet for GKAT.
	We show that the resulting axiomatisation is sound and complete. 
	We then extend Hoare hypotheses to the more general form of \emph{word hypotheses}.
	Based on an automata-theoretic approach, we show that equivalence of GKAT under word hypotheses is efficiently decidable.
\end{abstract}


\section{Introduction}

\emph{Guarded Kleene Algebra with tests (GKAT)} is a formalism for reasoning about control flow in simple imperative \verb+while+ programs.
Its axiomatisation allows for equational reasoning, and it features an efficient automata-theoretic decision procedure for equivalence of programs~\cite{SmolkaFHKKS20}. 
GKAT has been extended to include restricted forms of non-local control flow~\cite{nonlocal}, probabilities~\cite{RozowskiKKS023}, weighted branching~\cite{weightedgkat}, and approximate reasoning~\cite{GomezBG25}.

GKAT programs are uninterpreted deterministic programs built from a base set of primitive programs and propositional tests.
For instance, the if-then-else statement ``if $b$ holds then execute the uninterpreted program $p$, otherwise execute $q$'', is expressed concisely as $p \mathbin{\purple{+}}_{\TEST b} q$.
The operator $\mathbin{\purple{+}}_{\TEST b}$ is a guarded choice operator, closely related to branching operations common in process calculi~\cite{SchmidR0R22}, e.g., probabilistic choice~\cite{StarkS00}.
GKAT is a fragment of \emph{Kleene algebra with tests (KAT)}~\cite{Kozen97}, likewise allowing algebraic reasoning about program equivalence.
One source of motivation to study a fragment of KAT rather than KAT itself is the complexity of deciding program equivalence: decidability in KAT is PSPACE-complete, but program equivalence in GKAT can be decided in nearly linear time~\cite{SmolkaFHKKS20}.

Like in KAT, GKAT programs are strictly propositional, so equivalence in GKAT is agnostic about the exact nature of each primitive program and test.
This results in a notion of equivalence that only sees the control-flow structure of programs.
In practice, however, one usually has access to further information about the (sub)programs at hand, and would like to incorporate such information when reasoning about program equivalence.

A motivating example for this paper arises from the wish to model notions of \emph{hoisting} in programming languages,
as expressed for instance by the following equation in a language with variable assignments~\cite{LiellCockS25}:
\begin{equation}
    \label{eq:first_example}
    \IF{\TEST b}{(x \leftarrow t);~ u}{(x \leftarrow t);~ v}
    \ \ =\ \  (x \leftarrow t) ; ~\IF{\TEST b}{u}{v}
\end{equation}
Above, \(u\) and \(v\) are arbitrary programs, \((x \leftarrow t)\) represents the assignment of a value \(t\) to the variable \(x\), and \(\TEST b\) is a Boolean expression.
An equation such as~\eqref{eq:first_example} typically holds when $x$ does not occur free in the guard $b$. 
In GKAT, the above hoisting equation is modelled by abstracting away the details of the actions that occur.
Specifically, the program action \((x \leftarrow t)\) is replaced by a formal \emph{action symbol} \(p\).
Doing the same for \(u\) and \(v\), if we let \(q\) represent \(u\) and \(r\) represent \(v\), we can express the equivalence in~\eqref{eq:first_example} as follows in GKAT:
\begin{equation}
    \label{eq:hoisting-gkat}
    pq \mathbin{\purple{+}}_{\TEST b} pr = p(q \mathbin{\purple{+}}_{\TEST b} r)
\end{equation}
For ordinary programs, in which the tests and constituent programs are interpreted,~\eqref{eq:hoisting-gkat} is only valid when the execution of the program \(p\) has no effect on the truth value of \(\TEST b\).
For example, in~\eqref{eq:first_example} this lack of effect occurs when the variable \(x\) does not occur free in the guard \(\TEST b\). 
We abstractly model this at the level of GKAT as follows: we assert the following two equations between programs.
\begin{equation}\label{eq:bp-invariant}
	{\TEST b}p = {\TEST b}p{\TEST b} 
    \quad\text{ and } \quad 
    \TEST{\overline{b}}p = \TEST{\overline{b}}p \TEST{\overline{b}}
\end{equation}
Here, a guard \(\TEST{b}\) appearing outside of a \(\purple{+}_{\TEST b}\) represents an assertion.
The first equation says that if $\TEST b$ holds before execution of $p$ then it also holds afterwards, and the second says the same for the complement $\TEST{\overline{b}}$.
Together, these equations assert that $p$ does not affect the validity of $\TEST{b}$.
Assuming~\eqref{eq:bp-invariant}, the hoisting equation \eqref{eq:hoisting-gkat} is valid in GKAT.

In this paper, we aim to support reasoning in GKAT under hypotheses such as~\eqref{eq:bp-invariant}.
In the general form $\TEST bp = \TEST bp \TEST c$, where \(\TEST b\) and \(\TEST c\) are tests and \(p\) is a primitive program (what we will call an \emph{action}), these equations are referred to as \emph{Hoare hypotheses}.
Hoare hypotheses are the propositional counterparts to \emph{Hoare triples}: these are formulas written \(\{P\}C\{Q\}\), where \(P\) and \(Q\) are interpreted tests and \(C\) is an interpreted program, and \(P\) is a sufficient starting condition to ensure \(Q\) holds after \(C\) is run~\cite{Hoare}.
Hoare hypotheses have already been considered in the context of Kleene algebra with tests~\cite{Kozen97,DBLP:conf/csl/KozenS96,Kozen00}, where decidability and completeness results are proven for the resulting equational theory.
In fact, in plain Kleene algebra, the theory of reasoning about equivalence under such hypotheses is well developed, with both completeness and decidability results for classes of hypotheses that go well beyond Hoare hypotheses~\cite{Cohen94,KozenM14,DoumaneKPP19,PousRW24}.

However, in the context of GKAT, the addition of hypotheses remains entirely open.
It is challenging even to define a suitable (sound) interpretation of programs in the presence of hypotheses, let alone decidability and completeness of the new theory.
Without hypotheses, GKAT programs are interpreted in the same way as KAT programs, as languages of words whose letters alternate between test and action symbols.
The general technique used in KA and KAT for adapting their semantics to account for hypotheses is to ``blow up'' the language by adding all words necessary to ensure soundness of the considered hypotheses~\cite{DoumaneKPP19,PousRW24}. 
Subsequently, an expression is identified that describes the resulting language, and the problem of checking equivalence then reduces to the equivalence of the resulting expressions representing this semantics under hypotheses.
This technique is not applicable to GKAT: GKAT programs are deterministic, so to be able to recover an expression that represents the semantics under hypotheses, the language has to remain deterministic.
The novelty of our approach to interpreting GKAT expressions in the presence of hypotheses is that it \emph{cuts away} all irrelevant words. 
Determinism is what allows our decision procedure to retain the nearly-linear time complexity of ordinary GKAT~\cite{SmolkaFHKKS20}, whereas a decision procedure based on KAT (with hypotheses) would likely be in PSPACE.

We take the first step towards developing a theory of \emph{GKAT with hypotheses}, focusing on what we call \emph{word hypotheses}.
Word hypotheses are a generalization of the propositional Hoare hypotheses studied in the context of KAT in~\cite{Kozen00}, and they state that for a particular alternation of programs and tests, a specific postcondition needs to hold.
Propositional Hoare hypotheses are of the form $\TEST bp = \TEST bp \TEST c$, also discussed above in the example related to hoisting.
Word hypotheses generalise this to equations of the form $\TEST {b_0} p_1 \TEST {b_1} p_2 \ldots \TEST {b_{n-1}} p_n = \TEST {b_0} p_1 \TEST{b_1} p_2 \ldots \TEST{b_{n-1}} p_n \TEST {b_n}$,
where all $\TEST{b_i}$'s are tests and $p_i$'s are programs.

We start with the simpler form of Hoare hypotheses in \cref{sec:hoare hypotheses}.
There, we introduce a language semantics, and prove completeness of our axiomatisation with respect to that semantics. 
We then move to the main contribution of this paper in \cref{sec:words and queues}: a decision procedure for general word hypotheses in nearly linear time. 
The key idea is to construct so-called \emph{queue automata}, which remember the last read tests and programs, and which can detect the violation of a hypothesis. 
This yields precisely the behaviour of GKAT programs under word hypotheses, where program executions that violate a hypothesis are cut away. 
By showing that these queue automata reduce to plain GKAT automata, we obtain a reduction of decidability of equivalence to the original automata-theoretic decision procedure for equivalence of GKAT programs from~\cite{SmolkaFHKKS20}, preserving its near linear time complexity. 
Basic notions and notation are covered in a preliminary section, \cref{sec:prelims}.

\subparagraph*{Related work.} 
For plain Kleene algebra, there is a well-developed tradition and theory of reasoning under additional hypotheses~\cite{KozenM14,DoumaneKPP19,Cohen94,PousRW24}, often focusing on completeness and decidability w.r.t.\ a model that is canonically defined from the given hypotheses.
Hypothesis elimination in the context of Kleene algebra is also studied in relational models; see, e.g.,~\cite{DBLP:conf/RelMiCS/Nakamura24}.
Further, as mentioned above, Hoare hypotheses in the context of Kleene algebra with tests and associated reasoning was systematically studied in~\cite{Kozen00}.
The current paper instead focuses specifically on the addition of hypotheses to GKAT and its model of guarded strings. 

Propositional Hoare triples are considered in~\cite{SmolkaFHKKS20}, in the context of a completeness theorem for GKAT. 
However, only the Hoare triples that are already valid in GKAT are considered, and no reasoning under new hypotheses is studied. 
The Hoare triples studied in this paper are typically \emph{not} valid in GKAT, but are additional assumptions made about specific programs.

Program logics like NetKAT~\cite{AndersonFGJKSW14} and Guarded NetKAT~\cite{GNetKAT} involve reasoning about programs in a similar fashion to KAT and GKAT, but with additional hypotheses formally imposed by insight into the nature of primitive actions and tests.
The desire to capture the structure of these program logics abstractly provides further motivation for the current work.


\section{Guarded Kleene Algebra with Tests} 
\label{sec:prelims}

\emph{Guarded Kleene Algebra with Tests} (\emph{GKAT}) is an abstraction from standard imperative programming: its basic elements are a collection of \emph{primitive programs} \(\Sigma\) and a finite collection of \emph{primitive tests} \(T\), and all other programs expressible in the language of GKAT are formed by combining these elements with \texttt{if-then-else} and \texttt{while-do} constructs.
Formally, the collection \(\GExp\) of \emph{GKAT expressions} is given by the BNF grammar
\begin{align*}
    \GExp \ni e, f 
        &::= b \in \BExp \mid p \in \Sigma \mid e +_b f \mid e\cdot f \mid e^{(b)} \\
    \BExp \ni b, c 
        &::= t \in T \mid \bar b \mid b \vee c
\end{align*}
Intuitively understood, each test \(b \in \BExp\) (thought of as a GKAT expression) is an assertion \(\ASSERT b\), the expression \(e +_{b} f\) is conditional branching \(\IF{b}{e}{f}\), \(e \cdot f\) is sequential composition, and \(e^{(b)}\) is the conditional loop \(\WHILE{b}{e}\).
For brevity, we use juxtaposition \(ef\) instead of \(\cdot\) when the meaning is clear.
We define \(b \wedge c = \overline{(\bar b \vee \bar c)}\), \(1 = b \vee \bar b\), and \(0 = b \wedge \bar b\). 
We also sometimes write $+_{bc}$ instead of $+_{b \wedge c}$.

Since we have assumed that \(T\) is finite, there are only finitely many tests modulo the Boolean algebra axioms, and this in particular implies that there is a well defined notion of \emph{atomic test}.
Specifically, an \emph{atomic test} is a Boolean expression of the form \(\sigma(t_1) \wedge \cdots \wedge \sigma(t_n)\), where \(T = \{t_1, \dots, t_n\}\) and \(\sigma(t_i) \in \{t_i, \overline{t_i}\}\) for each \(i \le n\). 
The set of atomic tests in \(\BExp\) is written \(\At\), and we use the Greek letters \(\alpha, \beta\) for atomic tests.
Thinking of tests as descriptions of program states (like how the test \(\mathtt{x} > 0\) describes program states where the variable \(\mathtt{x}\) takes a value above \(0\)), an atomic test represents a complete description of one specific program state, insofar as the status of the tests of the program affects its flow.
It is worth noting that the number of atomic tests is exponential in the number of primitive tests.

\subparagraph*{Language semantics.}
GKAT expressions are abstract representations of programs that have a well defined notion of program trace. 
Just like in KAT~\cite{Kozen97}, these traces take the form of \emph{guarded strings}, which are alternating sequences of atomic tests and primitive programs.
For a set $X$ we denote by $X^*$ the set of all words over $X$, with the empty word denoted by $\varepsilon$.

\begin{definition}
    A \emph{guarded string} is a word \(w = \alpha_0 p_1 \alpha_1 \cdots \alpha_{n-1} p_n \alpha_n\) whose letters alternate between atomic tests and primitive programs, i.e.,  \(\alpha_0, \dots, \alpha_n \in \At\) and \(p_1, \dots, p_n \in \Sigma\). 
    Formally, the set of guarded strings is \((\At \cdot \Sigma)^* \cdot \At\).
    A \emph{language} of guarded strings, or \emph{guarded language}, is a set of guarded strings. 
    We write \(\GL\) for the set of guarded languages \(\GL = \mathcal P((\At \cdot \Sigma)^*\cdot \At)\). 
\end{definition}

Equating program traces with guarded strings leads to the standard interpretation of GKAT expressions, the \emph{language semantics}.
To state the language semantics precisely, we need to define the product of two guarded languages $L_1, L_2 \in \GL$. 
To this end, we define the \emph{product} $\diamond$ of two guarded strings \(u\alpha, \beta v \in (\At \cdot \Sigma)^*\cdot \At\) to be $u\alpha v$ if \(\alpha = \beta\) and undefined otherwise. 
This is then lifted to languages:
\begin{equation}
    L_1 \diamond L_2 = \left\{
        u\alpha \diamond \alpha v
        ~\middle|~ 
           u\alpha \in L_1, \text{ }
            \alpha v \in L_2
    \right\}
\end{equation}
Moreover, we define an associated Kleene star of a guarded language $L$ by $L^\star = \bigcup_{n \geq 0} L^{\langle n \rangle }$ where $L^{\langle 0 \rangle} = \At$
and $L^{\langle n+1 \rangle} = L \diamond L^{\langle n \rangle}$.

We write \(\BA\) for the equational theory of Boolean algebras, \(=_\BA\) for the congruence generated by \(\BA\), and \(\le_\BA\) for the order induced by \(\BA\), i.e., \(b \le_\BA c\) \emph{iff} \(b \vee c =_\BA c\).
For ease of notation, we also define \(b \diamond L = \{\alpha \in \At \mid \alpha \le_{\BA} b\} \diamond L\) and \(L \diamond b = L \diamond \{\alpha \in \At \mid \alpha \le_{\BA} b\}\). 
Later, we will make use of the fact that $(b \diamond L) \cap K = b \diamond (L \cap K)$ and $(L \diamond b) \cap K = (L \cap K) \diamond b$ for any guarded languages $L,K$ and
$b \in \BExp$.

\begin{definition}
    The \emph{language interpretation} (or \emph{language semantics}) \(\mathcal L(e)\) of a GKAT expression \(e \in \GExp\) is defined recursively on \(e \in \GExp\) by 
    \begin{gather*}
        \L(b) = \{\alpha \in \At \mid \alpha \le_\BA b\}
        \quad 
        \L(p) = \{\alpha p \beta \mid \alpha,\beta \in \At\}
        \\
        \L(e +_b f) = b\diamond \L(e) \cup \bar b\diamond \L(f)
        \quad 
        \L(ef) = \L(e)\diamond \L(f)
        \quad 
        \L(e^{(b)}) = (b \diamond \L(e))^{\star} \diamond \bar b
    \end{gather*}
    Thus, we obtain the \emph{semantic map} \(\mathcal L \colon \GExp \to \GL\).
    Given \(e, f\in \GExp\), if \(\L(e) = \L(f)\), we say that \(e\) and \(f\) are \emph{language equivalent}. 
\end{definition}

The following theorem is~\cite[Theorem 5.10]{SmolkaFHKKS20}, which states the decidability and time complexity of deciding language equivalence of GKAT expressions, assuming that the sets of atomic tests \(\At\) and alphabet \(\Sigma\) are held constant.
Given an expression \(e \in \GExp\), we write \(|e|\) for the number of symbols that appear in \(e\).

\begin{theorem}[Decidability in Nearly-Linear Time~\cite{SmolkaFHKKS20}] 
    \label{thm:gkat decidability}
    Let \(e,f \in \GExp\). 
    For \(|\At|\) held constant, it is decidable in \(\bigO(n\alpha(n))\)-time whether \(\L(e) = \L (f)\), where \(\alpha(n)\) is the inverse Ackermann function and $n = |e|+|f|$. 
\end{theorem}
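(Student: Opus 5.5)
This is the known result of~\cite{SmolkaFHKKS20}, and I would reconstruct its proof automata-theoretically in three stages.

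\emph{Stage one: automata.} Work with GKAT automata: a finite set of states \(S\) with a transition map \(\delta \colon S \to (\{0,1\} + \Sigma \times S)^{\At}\). On an atom \(\alpha\), a state either rejects, accepts, or emits an action \(p\) and moves to a new state. The language accepted by a state is defined in the evident way as a set of guarded strings. Starting from \(e\), I would build such an automaton by a Thompson-style construction (the derivative/pseudo-automaton construction of~\cite{SmolkaFHKKS20}). Its number of states is linear in \(|e|\), say one state per occurrence of an action symbol plus a start state. When \(|\At|\) is constant, each transition table has constant size, so the automaton is built in \(\bigO(|e|)\) time. The first thing to verify is correctness: the start state accepts exactly \(\L(e)\). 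I would prove this by induction on \(e\), matching each clause of \(\L\) with the corresponding combinator on automata. The loop case \(e^{(b)}\) needs care, since \(\L(e^{(b)}) = (b \diamond \L(e))^\star \diamond \bar b\) must be matched by the construction's rewiring of accepting transitions back to the start state.

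\emph{Stage two: normalisation.} Language equivalence of states is not the same as bisimilarity. A state may take an action and then reach a state that can never accept, and such a transition contributes no guarded strings. So I would first compute the set of dead states, those from which no accepting transition is reachable. This is a backward reachability search, linear in the size of the automaton. I would then replace every transition into a dead state by rejection. After this normalisation, two states accept the same language exactly when they are bisimilar. The soundness direction of that equivalence holds by coinduction. The harder direction uses the fact that, in a normal automaton, any mismatch in transition outcomes on some atom \(\alpha\) yields a distinguishing guarded string, because every live state accepts some string.

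\emph{Stage three: the bisimulation check.} Take the disjoint union of the two normalised automata. Run a Hopcroft--Karp style union-find procedure starting from the pair of start states: repeatedly take a pair, compare the outcomes for every atom \(\alpha\) (constantly many), and union the successor states. The procedure performs \(\bigO(n)\) union-find operations, which gives the \(\bigO(n\alpha(n))\) bound. The main obstacle I expect is the normalisation step together with the ``languages equal iff bisimilar after normalisation'' lemma, since the naive bisimulation check is unsound for language equivalence. I would handle the existence of distinguishing strings by a direct argument from the outcomes on atoms, using that every surviving state is live.
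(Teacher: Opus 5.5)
Your proposal is correct and follows the argument of \cite{SmolkaFHKKS20}, which this paper imports without proof (it recalls the automaton construction and the Hopcroft--Karp check as Steps~1 and~3 in Section~\ref{sec:words and queues}): a Thompson automaton with \(\bigO(|e|)\) states, pruning of transitions into dead states so that language equivalence coincides with bisimilarity, and a union-find bisimulation check giving \(\bigO(n\alpha(n))\). The one inference to tighten is ``constant-size transition tables, hence \(\bigO(|e|)\) construction time'': the rewiring of accepting transitions at sequential compositions and loops can revisit the same states repeatedly (quadratically on left-nested products if done by scanning), so you need some bookkeeping, e.g.\ per-atom lists of pending accepting transitions so that each is rewired at most once.
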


The complexity class \(\bigO(n\alpha(n))\) is called \emph{nearly-linear} because of the extremely slow growth of the inverse Ackermann function \(\alpha(n)\). 
The algorithm in~\cite{SmolkaFHKKS20} and its complexity analysis relies on~\cite{hopcroft1971linear,Tarjan75}, and requires explicit calculation of \(\At\) and \(\Sigma\) ahead of time.
Without the assumptions that \(|\At|\) and \(\Sigma\) are held fixed, the automata generated by GKAT expressions incur an exponential blow-up (because \(|\At| = 2^{|T|}\)).
This is a known limitation of existing decision procedures, and is implicit even in earlier work on KAT~\cite{CohenKozenSmithKAT}.
Recently developed symbolic methods (going back to~\cite{Pous15}) appear to avoid this issue in practice; see~\cite{ZhangFJVSG25}.

\subparagraph*{Axiomatisation.}
One of the key features of GKAT is that it allows for equational reasoning between programs. 
In~\cite{SmolkaFHKKS20}, a sound and complete axiomatisation of language equivalence for GKAT expressions is presented.
We refer to that axiomatisation as $\GKAT$, and record the axioms (other than the Boolean algebra axioms \(\BA\) and equational logic) in \cref{fig:axioms}.
We write $\GKAT \vdash e = f$ if the equality $e=f$ is derivable from the axioms in $\GKAT$.

Axioms (W3) and (UA) in \cref{fig:axioms} use the map \(E \colon \GExp \to \BExp\) defined recursively by
\begin{gather*}
    E(b) = b 
    \qquad 
    E(p) = 0
    \qquad 
    E(e +_b f) = bE(e) \vee \bar bE(f)
    \\
    E(ef) = E(e)\wedge E(f)
    \qquad 
    E(e^{(b)}) = \bar b
\end{gather*}
for any \(e \in \GExp\). 
Intuitively, \(E(e)\) is the least condition that guarantees \(e\) immediately halts and accepts, so \(E(e) =_\BA 0\) means that the program $e$ performs an action under any circumstance.
In terms of languages, \(E(e) = \bigvee (\At \cap \L(e))\).
The side condition \(E(e) =_\BA 0\) is analogous to the side condition regarding the \emph{empty word property} in Salomaa's axioms for the algebra of regular languages~\cite{Salomaa66}.
These side conditions are necessary to ensure soundness of (W3) and (AU):
for example, by (W1), \(\GKAT \vdash 1 = 1 \cdot 1 +_1 1\), but \(\L(1) \neq \L(1^{(1)}) = \emptyset\)~\cite{SmolkaFHKKS20}.

\begin{theorem}[\cite{SmolkaFHKKS20}]
	For any $e,f \in \GExp$, \(\GKAT \vdash e = f\) iff $\L(e) = \L(f)$.
\end{theorem}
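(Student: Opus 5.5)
The plan is to prove the two directions separately, as in~\cite{SmolkaFHKKS20}. Soundness is an induction on the length of a derivation of \(\GKAT \vdash e = f\). Equational logic and \(\BA\) are sound because \(\L\) is defined compositionally and \(\L(b)\) depends only on the \(=_\BA\)-class of \(b\). For each axiom in \cref{fig:axioms} I will verify language equality directly from the definitions of \(\diamond\), \(\star\) and guarded union. The only non-trivial cases are the fixpoint rule (W3) and the uniqueness axiom (UA). For these, the side condition \(E(e) =_\BA 0\) means every word of \(\L(e)\) has length at least one action. Hence the language map \(X \mapsto b\diamond\L(e)\diamond X \cup \bar b \diamond \L(f)\) is contractive, and its solution is unique by induction on the number of actions in a guarded string.

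For completeness, I will pass through automata. Each expression \(e\) is assigned a finite deterministic GKAT automaton \(X_e\) via a Thompson-style (or derivative-style) construction. In state \(x\) on atom \(\alpha\), such an automaton either rejects, accepts, or emits \(p\) and moves to a state \(y\). The first lemma to establish is the \emph{fundamental theorem}: \(\GKAT \vdash e = \bigl(\text{guarded sum over }\alpha\text{ of } \alpha\text{-outputs of } e\bigr)\), with each state's behaviour expressible by a derivative expression. This gives a left-affine system of equations whose solution is \(e\) itself. From this I will derive that \(\L(e)\) equals the language accepted by \(X_e\), so semantic equality of \(e,f\) becomes language equivalence of the initial states of \(X_e\) and \(X_f\).

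Next, I will normalise the automata by redirecting every transition into a state that accepts no word to reject. On normal automata, language equivalence coincides with bisimilarity. This step is where determinism is used: the guarded strings from a state determine its one-step behaviour on each atom once dead branches are removed. Given a bisimulation \(R\) between the normalised \(X_e\) and \(X_f\), I will build a single left-affine system indexed by the pairs in \(R\). I will then show that both the family of derivative expressions of \(e\) and the family of derivative expressions of \(f\) solve it up to provable equality. Normalisation itself must be shown provable, namely that dead states are provably equal to \(0\). Since the system's coefficients consume an action, (UA) then yields \(\GKAT \vdash e = f\).

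The main obstacle is this last step. Turning a general left-affine system, with many mutually dependent unknowns, into single equations of the shape required by (UA) needs a solution-elimination argument. I would do this by induction on the number of variables, using (W3) to solve one variable as a loop \(e^{(b)}f\) and substituting it into the rest. This requires care that the side conditions \(E(\cdot)=_\BA 0\) are preserved under substitution. I also expect difficulty in proving provably that dead states equal \(0\), since that is exactly where uniqueness of solutions must be invoked rather than (W3) alone.
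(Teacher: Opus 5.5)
Your overall route is the one in \cite{SmolkaFHKKS20}, which this paper only cites and does not reprove. Soundness goes by induction on derivations, with (W3) and (UA) handled by uniqueness of language solutions under \(E(-) =_\BA 0\). Completeness uses finite automata whose states admit provable solutions, normalisation with dead states provably equal to \(0\), the fact that language equivalence is bisimilarity on normal automata, and (UA) applied to a system indexed by a bisimulation. All of that is fine. Your remark that the dead-state lemma needs (UA) is also right: the given solution and the all-zero family both satisfy the system indexed by the dead states (the latter by (S3)), so (UA) equates them.

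The final paragraph, however, proposes a step that would fail, and it rests on a misreading of (UA). As stated in \cref{fig:axioms}, (UA) is already a rule for systems of \(n\) mutually dependent equations \(g_i = e_{1i}g_1 +_{b_{1i}} \cdots e_{ni}g_n +_{b_{ni}} d_i\), so there is nothing to reduce. For a bisimulation \(R\), take one unknown per pair \((x,y) \in R\). In the equation for \((x,y)\), give the unknown \((x',y')\) as coefficient the guarded combination of the actions \(p\) over the atoms \(\alpha\) with \(\delta(\alpha,x) = (p,x')\) and \(\delta(\alpha,y) = (p,y')\), with guard the disjunction of those atoms. Let \(d_{(x,y)}\) be the disjunction of the atoms accepted by \(x\) (equivalently by \(y\)). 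Determinism makes the guards pairwise disjoint, the coefficients satisfy \(E(-) =_\BA 0\), and both families of solutions satisfy the system after regrouping with the union axioms. So (UA) applies directly.

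By contrast, the elimination you propose does not go through in GKAT. After solving \(g_1 = e_{11}^{(b_{11})}(e_{21}g_2 +_{b_{21}} \cdots)\) by (W3) and substituting into the other equations, \(g_2\) sits under a guarded choice preceded by actions. Extracting it into the shape \(g_2 = e g_2 +_b f\) needed for (W3) requires left distributivity \(e(f +_b g) = ef +_b eg\). That law is unsound in GKAT: it is exactly the hoisting principle \(\mathsf{Hst}_{p,b}\), valid only under extra hypotheses. Moreover, an elimination using only (W3) would derive (UA) from the remaining axioms, which the paper records as open \cite{SchmidKS23}. Drop that paragraph and invoke (UA) on the system as it stands.
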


Note the presence of the axiom (UA). 
This axiom is an extenson of (W3) to arbitrary systems of equations, ensuring that solutions to systems (satisfying the side condition) are unique when they exist.
It is still open whether (UA) is derivable from the other axioms~\cite{SchmidKS23}.

\begin{figure}[t]
    \small
    \begin{gather*}
        \begin{gathered}
            \text{\bf Union Axioms}\hfill \\
            \begin{aligned}
                (\text{U1}) && e +_b e &= e \\
                (\text{U2}) && e +_b f &= f +_{\bar b} e \\
                (\text{U3}) && (e +_b f) +_c g &= e +_{bc} (f +_c g) \\
                (\text{U4}) && e +_b f &= be +_b f \\
                (\text{U5}) && eg +_b fg &= (e +_b f)g \\
                \phantom{()} 
            \end{aligned}
        \end{gathered}
        \qquad 
        \begin{gathered}
            \text{\bf Sequencing Axioms}\hfill \\
            \begin{aligned}
                (\text{S1}) && (ef)g &= e(fg) \\
                (\text{S2}) && 0e &= 0 \\
                (\text{S3}) && e0 &= 0 \\
                (\text{S4}) && 1e &= e \\
                (\text{S5}) && e1 &= e \\
                (\text{S6}) && bc &= b \wedge c
            \end{aligned}
        \end{gathered}
        \\
        \begin{gathered}
            \text{\bf Loop Axioms}\hfill \\
            \begin{aligned}
                (\text{W1}) && e^{(b)} &= ee^{(b)} +_b 1 \\
                (\text{W2}) && (e +_c 1)^{(b)} &= (ce)^{(b)} \\
            \end{aligned}
            \hspace{1em} (\text{W3}) \quad \inferrule{g = eg +_b f \quad E(e) =_{\BA} 0}{g = e^{(b)}f}
        \end{gathered}\
        \\
        (\text{UA}) \quad \inferrule{
            \parbox{16em}{\(\begin{gathered}
                \{g_i = e_{1i}g_1 +_{b_{1i}} \cdots e_{ni}g_n +_{b_{ni}} d_i\}_{i \le n}  
                \\ 
                \{f_i = e_{1i}f_1 +_{b_{1i}} \cdots e_{ni}f_n +_{b_{ni}} d_i\}_{i \le n}
            \end{gathered}\)}
                \quad \{b_{ij} b_{ik} =_\BA 0\}_{j \neq k, i \le n} 
                \quad \{E(e_{ij}) =_{\BA} 0\}_{i,j\le n}
            }{g_i = f_i}
    \end{gather*}
    \caption{\label{fig:axioms}\cite{SmolkaFHKKS20}
        Axioms for proving program equivalence of GKAT-expressions. 
        In the GKAT expressions that appear above, \(e,f,g,e_{ij},f_i,g_i \in \GExp\) and \(b,c,b_{ij},d_i \in \BExp\).
    }
\end{figure}

\section{Hoare hypotheses}
\label{sec:hoare hypotheses}

In this section, we incorporate \emph{Hoare hypotheses} into GKAT (see \cref{def:hoare hypothesis}).
We introduce their semantics, prove completeness, and obtain decidability as a consequence of the underlying construction of the completeness proof.
An analysis of the complexity of this decision procedure appears at the end of the section, but it is worth noting in advance that it is not as efficient as the decision procedure we introduce later in \cref{sec:words and queues} for word hypotheses.

Let us begin with the general definition of \emph{hypothesis} in the context of GKAT.

\begin{definition}
    \label{def:hypothesis}
    A \emph{hypothesis} is a formal equation \(e = f\), where \(e,f \in \GExp\). 
    For a set $H$ of hypotheses, we write $\GKAT + H \vdash e=f$ when $e=f$ is derivable from the axioms in \(\GKAT\) and the hypotheses in $H$.
\end{definition}

Intuitively, hypotheses extend our knowledge of the meanings of GKAT programs by imposing additional equations.
Following this intuition, \emph{Hoare hypotheses}, formally introduced below, can be thought of as statements of the form, ``if the test \(b\) holds before executing the primitive program \(p\), then the test \(c\) holds after \(p\) executes''.

\begin{definition}[Hoare hypothesis]
    \label{def:hoare hypothesis}
    A \emph{Hoare hypothesis} is a hypothesis of the form \(bp = bpc\), where \(b, c \in \BExp\) and \(p \in \Sigma\). 
    If \(H\) is a set of Hoare hypotheses, then we say that \(\GKAT+H\) is a \emph{Hoare extension} of \(\GKAT\).
\end{definition}

Simple as they are, Hoare hypotheses in GKAT are able to capture a number of different ways in which imperative programs might interact.
To illustrate the expressiveness of Hoare hypotheses in GKAT, let us return to our motivating application: a phenomenon in imperative programming called \emph{hoisting}, which allows one to ``hoist'' a program that appears at the start of each branch of an \(\IF{\!{-}\!}{\!{-}\!}{}\) construct out to just before the conditional, as in~\eqref{eq:first_example}.
Given a program \(p\) and a test \(b\), the \emph{hoisting principle for \(p\) over \(b\)} is the set of hypotheses 
\[
    \mathsf{Hst}_{p,b} = \{p(e +_b f) = pe +_b pf \mid e,f \in \GExp\}
\]
As discussed in the introduction, these hoisting principles are not already valid in \(\GKAT\).
However, they become valid with the addition of two Hoare hypotheses that state the preservation of the test \(b\) by the execution of the primitive program \(p\).
Formally, the \emph{invariant test hypothesis for \(b\) under \(p\)} is the set of Hoare hypotheses
\[
    \mathsf{Inv}_{p,b} = \{bp = bpb, \bar b p = \bar b p \bar b\}
\]
In fact, invariance is equivalent to hoisting.

\begin{restatable}{lemma}{lemmaInvarianceEquivToHoisting}
    Let $e,f \in \GExp$. $\GKAT + \mathsf{Hst}_{p,b} \vdash e=f$ 
        if and only if $\GKAT+\mathsf{Inv}_{p,b}\vdash e=f$.
\end{restatable}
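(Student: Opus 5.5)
It suffices to show that each hypothesis set is derivable in the theory generated by the other. Once that is done, any derivation in one extension can be converted into a derivation in the other by replacing each use of a hypothesis with its derivation. We therefore prove (i) $\GKAT + \mathsf{Inv}_{p,b} \vdash p(e +_b f) = pe +_b pf$ for all $e,f \in \GExp$, and (ii) $\GKAT + \mathsf{Hst}_{p,b} \vdash bp = bpb$ and $\GKAT + \mathsf{Hst}_{p,b} \vdash \bar b p = \bar b p \bar b$.

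For (ii), take the instance of $\mathsf{Hst}_{p,b}$ with $e = 1$ and $f = 0$. This gives $p(1 +_b 0) = p1 +_b p0$. First we show $\GKAT \vdash 1 +_b 0 = b$. By (U4) and (S5), $1 +_b 0 = b +_b 0$. Then $b +_b 0 = b +_b \bar b \cdot 0$, which is a guarded-union normal form equal to $b$; this is routine, using (U2) and (U4) to move $0$ to $\bar b 0 = 0$, and the standard derived fact $e +_b 0 = be$ from~\cite{SmolkaFHKKS20}. The right-hand side becomes $p +_b 0 = bp$ by (S3) and the same fact. Hence $pb = bp$. Multiplying on the left by $b$ and using (S6) gives $bpb = bbp = bp$. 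Symmetrically, the instance $e = 0$, $f = 1$ yields $p\bar b = \bar b p$, from which $\bar b p \bar b = \bar b p$ follows. In fact we get the stronger commutation $pb = bp$, which is all that is needed.

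For (i), first derive commutation from $\mathsf{Inv}_{p,b}$. We have
\[
bp = bpb = bpb + \text{nothing},
\]
so the real task is to show $pb = bpb$. Write $p = 1\cdot p = (b +_b \bar b)p = bp +_b \bar b p$ using (S4), the derived identity $1 = b +_b \bar b$ (from (U4), (U1) and Boolean reasoning), and (U5). Then
\[
pb = (bp +_b \bar b p)b = bpb +_b \bar b p b = bp +_b \bar b p \bar b b = bp +_b 0 = bp,
\]
using (U5), the two hypotheses, (S6) and (S3). Similarly $p\bar b = \bar b p$.

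Now compute
\[
pe +_b pf = b p e +_b p f = p b e +_b p f = p be +_b \bar b p f = p be +_b p\bar b f .
\]
The first step uses (U4); the third uses (U2) and (U4) applied to the other branch. It remains to show $p(be) +_b p(\bar b f) = p(be +_b \bar b f)$. This equals $p(e +_b f)$ by (U4) twice together with (U2).

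The main obstacle is this last left-distributivity step. GKAT has no general left distributivity of $\cdot$ over $+_b$, since $p$ may change the test, so we must exploit that the branches are already guarded after $p$. The first thing to try is to express both sides as $p\,g$ with $g = be +_b \bar b f$, and to show $p\,be +_b p\,\bar b f = p\,g$. Using $p b = b p$, rewrite the left side as $b p e +_b \bar b p f$. Similarly, expand $p g = (bp +_b \bar b p) g = bpg +_b \bar b pg$ by (U5). Then $bpg = pbg = p(b(be +_b \bar b f)) = pbe$ by the derived identity $b(e +_b f) = be$, and likewise $\bar b p g = p \bar b f$. This reduces everything to the derived guarded-union facts $b(e+_b f) = be$ and $e +_b 0 = be$, which follow from (U1)–(U5) and (S1)–(S6) and are standard in~\cite{SmolkaFHKKS20}.
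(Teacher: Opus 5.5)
Your proof is correct and has the same skeleton as the paper's. You show each hypothesis set is derivable from the other, use the instance $e = 1$, $f = 0$ of hoisting, and for the converse split $p$ as $bp +_b \bar b p$ and apply restriction $b(e +_b f) = be$. The difference is that you route both directions through the commutation $pb = bp$. From $\mathsf{Hst}_{p,b}$ you obtain $pb = p(1 +_b 0) = p1 +_b p0 = p +_b 0 = bp$ and only then multiply by $b$. The paper multiplies by $b$ first, $bpb = b(p1 +_b p0) = bp1 = bp$, so restriction simply discards the $p0$ branch. From $\mathsf{Inv}_{p,b}$ you first prove $pb = bp$, which is exactly the paper's derivation of commutation in the next lemma relating $\mathsf{Inv}_{p,b}$ and $\mathsf{Com}_{p,b}$, and then push tests through $p$. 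The paper instead uses the invariance equations to insert the test after $p$, computing $p(e +_b f) = bpb(e +_b f) +_b \bar b p \bar b (e +_b f) = bpbe +_b \bar b p \bar b f$, and then removes it again. Your detour buys a stronger intermediate fact, so you essentially get the equivalence with $\mathsf{Com}_{p,b}$ along the way. The price is that both of your directions depend on (S3) $e0 = 0$, through $p0 = 0$ and $\bar b p \bar b b = \bar b p 0 = 0$, while the paper's derivations (including those of (B) and restriction) never use (S3). The paper's footnote to the next lemma remarks that (S3) appears necessary to derive commutation from invariance. So the direct route is the more robust one, e.g.\ for the bisimulation GKAT setting without $e0 = 0$ mentioned as future work. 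One small citation slip: the final step $bp +_b 0 = bp$ in your derivation of commutation uses the derived fact $e +_b 0 = be$ together with $bb = b$, not (S3).
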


Each hypothesis \(bp = bpb\) is essentially the statement that the validity of \(b\) is ``maintained'' through the execution of \(p\).
Intuitively, this could also be expressed as the statement that running \(p\) before the assertion \(\ASSERT b\) has the same effect as running it after; i.e., that \(b\) and \(p\) \emph{commute}. 
The \emph{commutative principle for \(b\) and \(p\)} is the set of two hypotheses
\[
    \mathsf{Com}_{p,b} = \{pb=bp, p \bar b = \bar b p\}
\]
It turns out that these are also equivalent to invariant test hypotheses.

\begin{restatable}{lemma}{lemmaCommuteEquivToInvariance}
Let $e,f\in \GExp$. $\GKAT+\mathsf{Inv}_{p,b} \vdash e=f $
        if and only if $\GKAT+\mathsf{Com}_{p,b} \vdash e=f$.
\end{restatable}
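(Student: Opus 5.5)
It suffices to show that each hypothesis set is derivable from the other over \(\GKAT\). Then any derivation in one extended theory can be replayed in the other, replacing each use of a hypothesis by its derivation. So the plan is two derivations: \(\GKAT+\mathsf{Inv}_{p,b}\vdash pb=bp\) and \(p\bar b=\bar bp\), and conversely \(\GKAT+\mathsf{Com}_{p,b}\vdash bp=bpb\) and \(\bar bp=\bar bp\bar b\). By symmetry between \(b\) and \(\bar b\) (note \(\bar{\bar b}=_\BA b\)), it is enough to handle one of each pair; the other is the same argument with \(b\) and \(\bar b\) swapped.

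\textbf{From \(\mathsf{Com}\) to \(\mathsf{Inv}\).} This direction is short. Using \(pb=bp\), (S6) and (S1), we get \(bpb = b(bp) = (bb)p = bp\). The same computation with \(p\bar b=\bar bp\) gives \(\bar bp\bar b=\bar bp\).

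\textbf{From \(\mathsf{Inv}\) to \(\mathsf{Com}\).} This is the main step. The key identity is that for any \(e\), \(\GKAT\vdash e = be +_b \bar b e\), which follows from (U4), (U2), (U4) and (U1):
\[
    be +_b \bar be = e +_b \bar b e = \bar be +_{\bar b} e = e +_{\bar b} e = e .
\]
Apply it to \(e=pb\). This gives \(pb = bpb +_b \bar bpb\). By the hypotheses, \(bpb = bp\) and \(\bar bpb = \bar bp\bar b b = \bar bp0 = 0\), using (S6), \(\bar b\wedge b=_\BA 0\) and (S3). Hence \(pb = bp +_b 0\).

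Applying the same decomposition to \(e = bp\) gives \(bp = bbp +_b \bar bbp = bp +_b 0p = bp +_b 0\) by (S2). Therefore \(pb=bp\). The identity \(p\bar b=\bar bp\) follows symmetrically, now using \(b\bar b=_\BA 0\) and \(bp=bpb\).

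I expect the only delicate point to be checking that these guarded-union manipulations really use only the listed axioms (U1)–(U4) and (S1)–(S6), with no appeal to semantics. Since all the steps above are equational and elementary, I do not anticipate a real obstacle. Finally, the closure argument from the first paragraph transfers derivability of every \(e=f\) in both directions.
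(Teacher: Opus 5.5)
Your proposal is correct and follows essentially the same route as the paper. The \(\mathsf{Com}\Rightarrow\mathsf{Inv}\) direction is the same one-line calculation \(bpb=bbp=bp\). For \(\mathsf{Inv}\Rightarrow\mathsf{Com}\) the paper likewise splits \(pb\) as \(bpb +_b \bar b pb\) and uses the hypotheses and (S3) to reach \(bp +_b 0\); the only difference is that you obtain \(bp +_b 0 = bp\) by reapplying your decomposition identity to \(bp\) with (S2), whereas the paper uses (U4), (S2), (S4), (U5) and the identity \(b = 1 +_b 0\).
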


\subparagraph*{Semantics and Completeness.}
Given a Hoare extension \(\GKAT + H\) of GKAT, we provide a semantics $\L_H(-)$ such that \(\GKAT + H \vdash e=f\) iff $\L_H(e)=\L_H(f)$.
The idea behind our semantics for GKAT with Hoare hypotheses is to adjust the ordinary language semantics of GKAT by restricting it to the guarded strings that satisfy the hypotheses. 
Later, we will show how to reduce semantic equivalence to ordinary language equivalence of GKAT and appeal to \cref{thm:gkat decidability} for decidability.

Intuitively, the language semantics of a GKAT expression \(e\) (in general) consists of all program traces that are possible during an execution of \(e\). 
The basic idea, then, is that under a hypothesis \(bp = bpc\), there are fewer of these possible program traces.
In particular, if a trace contains a step \(\alpha_ip\) with \(\alpha_i \le_{\BA} b\), then the execution of \(p\) must proceed with an \(\alpha_{i+1}\) in which \(c\) holds.
Our semantics therefore removes from the ordinary language semantics all traces that do not satisfy this condition. 

\begin{definition}
    \label{def:language up to H}
    Let \(H\) be a set of Hoare hypotheses.
    Given \(e \in \GExp\), we define the \emph{language semantics up to \(H\)} of \(e\) to be the guarded language
    \[
        \L_H(e) = \L(e)\cap \left\{ \alpha_0p_1\cdots\alpha_{n-1}p_n\alpha_{n} ~\Big|~ 
        \begin{gathered}
            \text{for all }i \le n, \text{if }
            b p_i = b p_i c\in H\\
            \text{ and } \alpha_i\leq_{\BA} b \text{, then } \alpha_{i+1}\leq_{\BA} c
        \end{gathered}~    
        \right\} 
    \]
    Given \(e,f \in \GExp\), we say that \(e\) and \(f\) are \emph{language equivalent up to \(H\)} if \(\mathcal L_H(e) = \mathcal L_H(f)\).
\end{definition}

It will be useful to observe that $\L_H$ is a homomorphism, in the following sense.
\begin{restatable}{lemma}{lemmaLHHom}\label{lem:lh-hom}
	For any set of Hoare hypotheses $H$, $e,f \in \GExp$ and $b \in \BExp$:
	\begin{align*}
		\L_H(b) &= \{\alpha \in \At \mid \alpha \le_\BA b\} & \L_H(ef) &= \L_H(e)\diamond \L_H(f) \\ 
		\L_H(e +_b f) &= b\diamond \L_H(e) \cup \bar b\diamond \L_H(f) & \L_H(e^{(b)}) &= (b \diamond \L_H(e))^{\star} \diamond \bar b
	\end{align*}
\end{restatable}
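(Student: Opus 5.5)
Write $G_H$ for the set of guarded strings that respect $H$, that is, the right-hand set in \cref{def:language up to H}, so that $\L_H(e) = \L(e) \cap G_H$. Each clause of \cref{lem:lh-hom} then follows from the matching clause of the ordinary semantics $\L$. The only structural fact about $G_H$ we need is that it is a \emph{local} condition. Membership of $\alpha_0p_1\alpha_1\cdots p_n\alpha_n$ in $G_H$ only constrains consecutive triples $\alpha_{i-1}p_i\alpha_i$. Two consequences follow.
\begin{itemize}
\item[(a)] Every single atom $\alpha$ lies in $G_H$.
\item[(b)] For $u\alpha$ and $\alpha v$, we have $u\alpha \diamond \alpha v \in G_H$ iff both $u\alpha \in G_H$ and $\alpha v \in G_H$.
\end{itemize}
Fact (b) holds because the triples of $u\alpha v$ are exactly the triples of $u\alpha$ together with those of $\alpha v$; the shared atom $\alpha$ is the middle boundary and is not itself a triple. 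I will prove (a) and (b) first.

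Consequently $(L_1 \diamond L_2)\cap G_H = (L_1\cap G_H)\diamond(L_2\cap G_H)$ for all guarded languages $L_1, L_2$. This single distributivity identity drives every case.

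The clauses then go as follows.
\begin{itemize}
\item For $b$: $\L(b)\subseteq \At\subseteq G_H$ by (a), so $\L_H(b)=\L(b)$.
\item For $ef$: apply the identity to $\L(e)\diamond\L(f)$.
\item For $e+_bf$: intersection distributes over union, and the paper's stated fact $(b\diamond L)\cap K = b\diamond(L\cap K)$ handles each summand.
\item For the loop: first show by induction on $n$ that $L^{\langle n\rangle}\cap G_H = (L\cap G_H)^{\langle n\rangle}$. The base case $n=0$ is (a), and the step is the distributivity identity. Hence $L^\star\cap G_H=(L\cap G_H)^\star$, since intersection distributes over the union. Then
\[
\L(e^{(b)})\cap G_H = \bigl((b\diamond\L(e))^\star \cap G_H\bigr)\diamond \bar b = \bigl(b\diamond \L_H(e)\bigr)^\star\diamond\bar b .
\]
The first equality uses $(L\diamond b)\cap K=(L\cap K)\diamond b$. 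The second combines the star identity with the $b\diamond$ fact.
\end{itemize}

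The main point needing care is (b), specifically the index bookkeeping in \cref{def:language up to H}. The condition is stated for the pair $(\alpha_i,\alpha_{i+1})$ around $p_{i+1}$, even though it is indexed loosely as ``$\alpha_i$ before $p_i$''. I will read it as constraining each occurrence of a letter $p$ together with the atoms immediately before and after it. Under that reading the concatenation splits the set of such occurrences disjointly, so the ``iff'' in (b) is immediate. The $\supseteq$ inclusion of the distributivity identity uses the ``if'' direction. The $\subseteq$ inclusion uses the ``only if'' direction, because any witness $u\alpha\in L_1$, $\alpha v\in L_2$ of a product string in $G_H$ automatically has both factors in $G_H$.
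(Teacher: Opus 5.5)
Your proposal is correct and matches the paper's proof essentially step for step. The paper likewise isolates the set $X$ of strings respecting $H$ and derives $(L_1\diamond L_2)\cap X = (L_1\cap X)\diamond(L_2\cap X)$ from the splitting property of $u\alpha v$ together with $\At\subseteq X$; it then handles the loop by induction on the powers $(b\diamond L)^{\langle n\rangle}$, and your reading of the off-by-one indexing in \cref{def:language up to H} agrees with the appendix's re-indexed form $\alpha_ip_i\alpha_{i+1}$.
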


The axioms of GKAT with Hoare hypotheses are sound with respect this semantics.

\begin{restatable}[Soundness]{theorem}{theoremSoundness}
    \label{thm:soundness lang}
    Let \(H\) be a set of Hoare hypotheses.
    For any \(e,f \in \GExp\), 
    if \(\GKAT + H \vdash e = f\), then \(\L_H(e) = \L_H(f)\).
\end{restatable}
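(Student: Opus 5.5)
The proof goes by induction on the derivation of \(\GKAT + H \vdash e = f\). Reflexivity, symmetry and transitivity are immediate because \(\L_H\) is a function. Congruence under \(+_b\), \(\cdot\) and \((-)^{(b)}\) follows from \cref{lem:lh-hom}: \(\L_H\) of a compound expression depends only on \(\L_H\) of its immediate subexpressions. What remains is to show that each axiom instance and each hypothesis in \(H\) holds under \(\L_H\).

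\textbf{Hypotheses.} Take \(bp = bpc \in H\). Then \(\L(bpc) = \L(bp) \cap \{\alpha p \beta \mid \beta \le_\BA c\}\). The words of \(\L(bp)\) have the form \(\alpha p \beta\) with \(\alpha \le_\BA b\). By \cref{def:language up to H}, the restriction defining \(\L_H\) discards every such word with \(\beta \not\le_\BA c\). Hence \(\L_H(bp) = \L_H(bpc)\).

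\textbf{Equational axioms.} The Boolean algebra axioms, (U1)–(U5), (S1)–(S6), (W1) and (W2) are all sound for \(\L\). Their soundness proofs in~\cite{SmolkaFHKKS20} use only algebraic laws of the operations \(\diamond\), \(\cup\), \(b \diamond -\) and \((-)^\star\) on arbitrary guarded languages. By \cref{lem:lh-hom}, \(\L_H\) is computed by exactly the same operations from the atomic values \(\L_H(b)\) and \(\L_H(p) = \L(p) \cap R_H\), where \(R_H\) is the restriction set in \cref{def:language up to H}. So I plan to observe that these axioms hold in the algebra of guarded languages under any interpretation \(p \mapsto L_p \subseteq \At\,p\,\At\), with tests interpreted as usual. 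To confirm this, I would recheck each axiom directly in that generality. Equivalently, each side of an axiom denotes, for arbitrary \(L_p\), the same language; this is the standard fact that \(\GKAT\) is sound for all "guarded language interpretations".

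\textbf{Quasi-equational rules.} The main obstacle is (W3) and (UA), whose soundness relies on the side condition \(E(e) =_\BA 0\). The key fact is \(\L_H(e) \cap \At = \L(e) \cap \At\), since the restriction only affects words containing an action. Thus \(E(e) =_\BA 0\) implies that \(\L_H(e)\) contains no word of length zero.

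For (W3), suppose \(\L_H(g) = b \diamond \L_H(e) \diamond \L_H(g) \cup \bar b \diamond \L_H(f)\). I would show by induction on the number of actions in a guarded string that \(\L_H(g)\) and \(\L_H(e^{(b)} f) = (b \diamond \L_H(e))^\star \diamond \bar b \diamond \L_H(f)\) agree on all words with at most \(n\) actions. Both sides satisfy the same fixpoint equation, and each unfolding consumes at least one action, since \(\L_H(e)\) has no action-free words. Hence the fixpoint is unique.

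(UA) is handled the same way by simultaneous induction on word length. Two families of languages that both solve the system must agree: the guards \(b_{ji}\) are pairwise disjoint and each \(e_{ji}\) has no empty-word part, so a word of \(g_i\) with \(n\) actions is determined by words of the \(g_j\) with fewer actions. I expect this uniqueness-of-solutions argument to be the only step requiring real care.
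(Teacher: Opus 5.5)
Your proof is correct and has the same skeleton as the paper's. \(\L_H\)-equivalence is a congruence by \cref{lem:lh-hom}, each hypothesis \(bp = bpc\) is checked directly, and the \(\GKAT\) axioms are validated. The differences lie in that last part.

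For the purely equational axioms, the paper has a much shorter argument than your ``sound under every guarded-language interpretation'' route. \(\L_H(e) = \L(e) \cap X\) for a single fixed set \(X\), the restriction set of \cref{def:language up to H}. So any instance \(e = f\) of an equational axiom already satisfies \(\L(e) = \L(f)\) by soundness of \(\GKAT\). Intersecting both sides with \(X\) gives \(\L_H(e) = \L_H(f)\) at once, with no need to recheck each axiom in greater generality.

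For (W3) and (UA), your treatment is more careful than the paper's. The paper does not treat these rules separately: it appeals to soundness of \(\GKAT\) for \(\L\). That does not literally apply, because the premises of these rules are now \(\L_H\)-equalities rather than \(\L\)-equalities. Your unique-fixpoint argument is exactly what is needed to show the derivation relation is closed under these two rules. It rests on \(\L_H(e) \cap \At = \L(e) \cap \At\), so that \(E(e) =_\BA 0\) still forces \(\L_H(e)\) to contain no action-free words.
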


To prove completeness, we adapt a standard approach from Kleene algebra with hypotheses (e.g.,~\cite{PousRW24}) to the GKAT setting: we construct a \emph{reduction}.
That is, we define a syntactic translation of GKAT expressions that incorporates the hypotheses into any given expression.
We start by defining the reduction for a single hypothesis.

\begin{definition}
	Let \(h\) be the Hoare hypothesis \(bp = bpc\). 
	The \emph{translation} function $\transl{h} \colon \GExp \rightarrow \GExp$ is the homomorphic extension of the map defined by
	\begin{gather*}
		\transl{h}(d) = d 
		\qquad 
		\transl{h}(p) = pc +_b p
		\qquad 
		\transl{h}(q) = q 
	\end{gather*}
	where \(d \in \BExp\) and \(q \in \Sigma\) with \(p \neq q\).
\end{definition}

Given a hypothesis $h$, the translation $\transl{h}(e)$ of an expression $e$ is provably equivalent to $e$ up to the axioms of $\GKAT$ and the hypothesis $h$.

\begin{lemma}
    \label{lem:transl equiv}
    For any \(e \in \GExp\), \(\GKAT+h \vdash \transl{h}(e) = e\).
\end{lemma}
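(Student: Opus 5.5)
We argue by structural induction on \(e \in \GExp\), using that \(\transl{h}\) is a homomorphism and that provable equality in \(\GKAT + h\) is a congruence for \(+_b\), \(\cdot\) and \((-)^{(b)}\). Equational logic provides this congruence, including under loops, since \(e^{(b)}\) is just a term constructor and we substitute equals for equals.

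For the base cases, tests \(d \in \BExp\) and actions \(q \neq p\) are immediate, because \(\transl{h}(d) = d\) and \(\transl{h}(q) = q\) by definition. The only real case is the action \(p\) itself, where we must show \(\GKAT + h \vdash pc +_b p = p\). We derive it as follows:
\begin{align*}
    pc +_b p
    &= b(pc) +_b p && \text{(U4)}\\
    &= (bp)c +_b p && \text{(S1)}\\
    &= bp +_b p && \text{hypothesis } bp = bpc\\
    &= p +_b p && \text{(U4), (S1)}\\
    &= p && \text{(U1)}
\end{align*}
In the third line we read the hypothesis right to left, \(bpc = bp\), and in the fourth line we apply (U4) backwards to remove the guard \(b\) from the left branch.

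For the inductive steps, take \(e = e_1 +_{d} e_2\), \(e_1 e_2\), or \(e_1^{(d)}\). Then \(\transl{h}(e)\) is the same constructor applied to \(\transl{h}(e_1)\) and \(\transl{h}(e_2)\), with the test \(d\) unchanged. By the induction hypothesis we have \(\GKAT + h \vdash \transl{h}(e_i) = e_i\), and congruence closes each case. Note that we never invoke (W3) or (UA), so their side conditions on \(E\) play no role here.

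The only nontrivial step is the base case for \(p\). There the one thing to check is that the uses of (U4) and (S1) line up the guard \(b\) correctly with the shape \(bp\) of the hypothesis. Everything else is routine.
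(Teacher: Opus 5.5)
Your proof is correct and matches the paper's argument: structural induction, with the compound cases closed by congruence because \(\transl{h}\) is a homomorphism, and the one real base case \(pc +_b p = bpc +_b p = bp +_b p = p +_b p = p\) obtained from (U4), the hypothesis, (U4) and (U1). Your appeal to (S1) only makes explicit the associativity the paper leaves implicit, and it is not actually needed in your fourth line.
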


\begin{proof}
    By induction on \(e\). 
    The inductive cases are covered because $\transl{h}$ is a homomorphism. 
    The only interesting base case is where \(e = p\) and where the hypothesis \(h\) is \(bp = bpc\). 
    We have the following derivation, in which we use (U4), (\(h\)), (U4), and (U1), in that order:
    \begin{align*}
        \GKAT+h \vdash 
        \transl{h}(p)
        = pc +_b p
        = bpc +_b p 
        = bp +_b p 
        = p +_b p 
        = p \tag*{\qedhere}
    \end{align*}
\end{proof}

We are now in a position to define the reduction for multiple Hoare hypotheses.

\begin{definition}
    Let \(H\) be a set of Hoare hypotheses.
    The translation $\transl{H} \colon \GExp \rightarrow \GExp$ of \(e \in \GExp\) is defined inductively by \[
        \transl{H}(e) = \transl{h}(\transl{H\setminus h}(e))
    \]
    where \(H\setminus h\) is the set of Hoare hypotheses \(H\) excluding \(h\).
\end{definition}

This is well defined because of the following lemma, which shows that the order in which the translations for individual Hoare hypotheses in $H$ are applied is immaterial.

\begin{lemma}
    \label{lem:hypcommute}
    For any Hoare hypotheses \(h_1\) and \(h_2\), the translations commute:
    \[
        \GKAT + H \vdash \transl{h_1}\transl{h_2}(e)
        = \transl{h_2}\transl{h_1}(e).
    \]
\end{lemma}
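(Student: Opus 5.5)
Since both \(\transl{h_1}\) and \(\transl{h_2}\) are homomorphic extensions of maps on atoms, so is their composite in either order. I will therefore argue by induction on \(e\). The cases \(e +_b f\), \(ef\) and \(e^{(b)}\) follow from the induction hypothesis and the congruence rules of equational logic. Tests are fixed by both translations, so those cases are trivial. The substantive work is the base case \(e = q \in \Sigma\).

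For \(e = q\), let \(h_1\) be \(b_1p_1 = b_1p_1c_1\) and \(h_2\) be \(b_2p_2 = b_2p_2c_2\). If \(q\) differs from both \(p_1\) and \(p_2\), both sides equal \(q\). If \(q = p_1 \neq p_2\), then
\[
\transl{h_1}\transl{h_2}(q) = \transl{h_1}(q) = qc_1 +_{b_1} q = \transl{h_2}(qc_1 +_{b_1} q) = \transl{h_2}\transl{h_1}(q),
\]
because \(\transl{h_2}\) fixes \(q\) and all tests. The case \(q = p_2 \neq p_1\) is symmetric. These cases hold syntactically and need no hypotheses.

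The main obstacle is \(q = p_1 = p_2 = p\). Here \(\transl{h_1}\transl{h_2}(p) = (pc_1 +_{b_1} p)c_2 +_{b_2} (pc_1 +_{b_1} p)\), and the other order swaps the indices. I do not expect a purely \(\GKAT\) proof, since the nesting of guards differs. Instead I will use \cref{lem:transl equiv} together with the hypotheses in \(H\) (which contains \(h_1,h_2\)) to reduce both sides to a common normal form. Using (U5) and (U4), pull the guards to the front and rewrite both sides as a four-way guarded choice over \(b_1b_2\), \(b_1\bar b_2\), \(\bar b_1 b_2\) and \(\bar b_1\bar b_2\), with (U3) and (U2) used to reassociate.

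The branches then compare as follows.
\begin{itemize}
\item Under \(b_1b_2\), one side gives \(pc_1c_2\) and the other \(pc_2c_1\); these agree by (S6) and commutativity in \(\BA\).
\item Under \(b_1\bar b_2\), the first side gives \(pc_1\) directly. The other gives \(pc_1\) via the inner choice \(pc_2 +_{b_2} p\) evaluated under \(\bar b_2\), followed by \(c_1\).
\item The remaining branches are handled similarly.
\end{itemize}
Where a residual mismatch appears, for instance a leftover \(pc_2\) that must be turned into \(p\) or conversely, I will insert the guard via (U4) and apply \(h_2\) (\(b_2p = b_2pc_2\)) to absorb or introduce the postcondition. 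This is the step where \(\GKAT + H\), rather than plain \(\GKAT\), is needed. Finally, (U1) collapses duplicate branches, giving provable equality of the two sides.
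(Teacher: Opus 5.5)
Your proof is correct for the statement as written, and its skeleton matches the paper's: induction on \(e\), with every case trivial except \(p_1 = p_2 = p\). In that case the paper gives an explicit rewrite chain that uses only (U2), (U3), (U3'), (U5) and \(\BA\). You instead normalise both sides to a four-way guarded choice over \(b_1b_2\), \(b_1\bar b_2\), \(\bar b_1 b_2\), \(\bar b_1\bar b_2\). That is a tidier way to organise the same calculation. However, ``evaluating an inner choice under \(\bar b_2\)'' must be justified by a context identity such as \(e +_b (f +_c g) = e +_b (f +_{\bar b c} g)\). This identity follows from (U3') together with \(b \vee \bar b c =_\BA b \vee c\). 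You could also dispatch the case at once: the two sides are language equivalent, so completeness of \(\GKAT\) applies.

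Your expectation that no purely \(\GKAT\) proof exists is mistaken, and your own branch table refutes it. Under \(b_1b_2\) both sides give \(pc_1c_2\) up to (S6) and \(\BA\). Under the other three guards both sides give \(pc_1\), \(pc_2\) and \(p\) respectively. So the fallback of invoking \(h_2\) never fires and should be dropped.

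This matters for how the lemma is used. In the proof of \cref{lemma:semantic-equivalence} it is applied to conclude \(\L(\transl{H\setminus h}(\transl{h}(p))) = \L(\transl{h}(\transl{H\setminus h}(p)))\). That is an equality of plain language semantics, and it follows only from a derivation that is sound for \(\L\), i.e.\ one using no hypotheses from \(H\). Indeed, if you were allowed to appeal to \(h_1, h_2 \in H\), the lemma would need no induction at all. \cref{lem:transl equiv} already gives \(\transl{h_1}\transl{h_2}(e) = \transl{h_2}(e) = e = \transl{h_1}(e) = \transl{h_2}\transl{h_1}(e)\) in \(\GKAT + H\).
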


\begin{proof}
    Suppose \(h_i\) is the hypothesis \(b_ip_i = b_ip_ic_i\) for \(i=1,2\).
    We proceed by induction on \(e\).
    The only interesting case is when \(e = p_1 = p_2\), since the other cases are either trivial applications of the induction hypothesis or the two translations do not interact.
    So, let \(p = p_1 = p_2\).
    In the calculation below, we will use the identity \(\GKAT \vdash e +_b (f +_c g) = (e +_b f) +_{b \vee c} g\), which we denote (U3').
    This is also (U3') in~\cite{SmolkaFHKKS20}, where it is proven by repeatedly using (U2) and (U3) and the Boolean algebra axioms.
    We have 
    \begin{align*}
        \GKAT + H \vdash \transl{h_1}\transl{h_2}(p)
        &= \transl{h_1}(pc_2 +_{b_2} p)                                      
            \tag{def.~\(T_{h_2}\)}
            \\
        &= (pc_1 +_{b_1} p)c_2 +_{b_2} (pc_1 +_{b_1} p)                      
            \tag{def.~\(T_{h_1}\)}
            \\
        &= (pc_1c_2 +_{b_1} pc_2) +_{b_2} (pc_1 +_{b_1} p)                   
            \tag{U5}
            \\
        &= pc_1c_2 +_{b_1b_2} (pc_2 +_{b_2} (pc_1 +_{b_1} p))                
            \tag{U3}
            \\
        &= pc_1c_2 +_{b_1b_2} ((pc_2 +_{b_2} pc_1) +_{b_1\vee b_2} p)               
            \tag{U3'}
            \\
        &= pc_1c_2 +_{b_1b_2} ((pc_1 +_{\bar b_2} pc_2) +_{b_1\vee b_2} p)               
            \tag{U2}
            \\
        &= pc_1c_2 +_{b_1b_2} (pc_1 +_{b_1} (pc_2 +_{b_1\vee b_2} p))               
            \tag{U3, \(\BA\)}
            \\
        &= pc_1c_2 +_{b_1b_2} ((pc_1 +_{b_1} pc_2) +_{b_1\vee b_2} p)              
            \tag{U3', \(\BA\)}
            \\
        &\hspace{4em}{\vdots} \tag{same steps backwards}\\
        &= \transl{h_2}\transl{h_1}(p) \tag*{\qedhere}
    \end{align*}
\end{proof}

A routine induction on the number of hypotheses allows us to extend \cref{lem:transl equiv} to an arbitrary finite set of Hoare hypotheses \(H\).

\begin{theorem}
    \label{theorem:syntactic-equivalence}
    For any \(e \in \GExp\), \(\GKAT+H \vdash \transl{H}(e) = e\). 
\end{theorem}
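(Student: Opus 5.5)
The plan is to prove the statement by induction on the number of hypotheses in \(H\), assuming (as the recursive definition of \(\transl{H}\) implicitly does) that \(H\) is finite. We use \cref{lem:transl equiv} for the single-hypothesis step and \cref{lem:hypcommute} to make sense of the choice of \(h\) in the definition of \(\transl{H}\).

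\textbf{Base case.} For \(H = \emptyset\), the translation \(\transl{\emptyset}\) is the identity, so \(\GKAT \vdash \transl{\emptyset}(e) = e\) by reflexivity.

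\textbf{Inductive step.} Suppose \(H\) is nonempty and pick \(h \in H\). By definition, \(\transl{H}(e) = \transl{h}(\transl{H\setminus h}(e))\).
\begin{itemize}
\item Apply \cref{lem:transl equiv} to the expression \(\transl{H\setminus h}(e)\). This gives \(\GKAT + h \vdash \transl{h}(\transl{H\setminus h}(e)) = \transl{H\setminus h}(e)\).
\item By the induction hypothesis, \(\GKAT + (H\setminus h) \vdash \transl{H\setminus h}(e) = e\).
\end{itemize}
Both derivations are valid in the larger system \(\GKAT + H\), since derivability is monotone in the set of hypotheses. Transitivity of equality then yields \(\GKAT + H \vdash \transl{H}(e) = e\).

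\textbf{Independence of the choice of \(h\).} The only subtle point is whether the result depends on which \(h\) is peeled off first. The argument above works for any choice of \(h\): whichever order of translations is used to compute \(\transl{H}(e)\), the result is provably equal to \(e\) in \(\GKAT + H\). So well-definedness up to provable equality comes for free. This is also consistent with \cref{lem:hypcommute}, which shows directly that different orders give provably equal expressions.

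No step is a real obstacle. The only point needing care is to state the induction over an arbitrary enumeration of \(H\), so that the argument does not rely on a canonical ordering of the hypotheses.
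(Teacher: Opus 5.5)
Your proof is correct and is the paper's own argument: the paper dismisses this as ``a routine induction on the number of hypotheses'' extending \cref{lem:transl equiv}, which is exactly your peel-off-one-\(h\) induction, using monotonicity of derivability and transitivity. One small remark: your claim that order-independence ``comes for free'' only gives equality up to \(\GKAT + H\), which is enough here, whereas \cref{lemma:semantic-equivalence} later needs the stronger fact, implicit in the proof of \cref{lem:hypcommute}, that different orders are already equal in plain \(\GKAT\).
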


The following relationship between the language model of \(\GKAT\) and the language model \(\L_H\) of \(\GKAT + H\) is core to the proof that \(\transl{H}\) is a meaningful translation of GKAT expressions.

\begin{theorem}\label{lemma:semantic-equivalence}
    Let $e\in \GExp$. 
    Then $\L_H{(e)}=\L{(\transl{H}(e))}$.
\end{theorem}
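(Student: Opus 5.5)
The plan is structural induction on \(e\), with \cref{lem:lh-hom} carrying all the compound cases. Since \(\transl{H}\) is a composite of homomorphisms, it is itself a homomorphism. For \(e +_b f\), \(ef\) and \(e^{(b)}\), the definition of \(\L\) gives, for instance,
\[
\L(\transl{H}(ef)) = \L(\transl{H}(e))\diamond\L(\transl{H}(f)),
\]
and the induction hypothesis turns the right-hand side into \(\L_H(e)\diamond\L_H(f)\), which equals \(\L_H(ef)\) by \cref{lem:lh-hom}. The branching and loop cases go the same way. The test case is immediate, since \(\transl{H}(d)=d\) and \(\L_H(d)=\L(d)\) by \cref{lem:lh-hom}. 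So everything reduces to the base case of a primitive program \(p\). Here I assume \(H\) is finite, as in the definition of \(\transl{H}\).

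For the base case, let \(H_p=\{h_1,\dots,h_k\}\) be the hypotheses of \(H\) about \(p\), with \(h_i\) being \(b_ip=b_ipc_i\). Translations for hypotheses about other actions leave \(p\) fixed, so \(\transl{H}(p)=\transl{h_1}\cdots\transl{h_k}(p)\). The right-hand side is computed by induction on \(k\). A translation \(\transl{h}\) acts on an expression only through its occurrences of \(p\), and \(\L\) is compositional. So I will prove the following by induction on \(k\):
\[
\L(\transl{h_1}\cdots\transl{h_k}(p)) = \{\alpha p\beta \mid \text{for all } i\le k,\ \alpha\le_\BA b_i \Rightarrow \beta\le_\BA c_i\}.
\]
The case \(k=0\) is just \(\L(p)\). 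For the inductive step, write \(g=\transl{h_2}\cdots\transl{h_k}(p)\). Since \(g\) is built from \(p\) and tests using \(+_b\) and \(\cdot\) only, \(\transl{h_1}(g)\) is \(g\) with each \(p\) replaced by \(pc_1+_{b_1}p\). The inductive claim then follows from a substitution lemma: if \(\L(p')=\{\alpha p\beta\in\L(p)\mid \alpha\le b\Rightarrow\beta\le c\}\), then replacing \(p\) by \(p'\) in any expression whose guarded strings contain at most one \(p\) intersects its language with the corresponding constraint. This lemma is itself a routine induction, using \((b\diamond L)\cap K=b\diamond(L\cap K)\) and its mirror.

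A cleaner alternative is to commute the steps instead, which avoids the substitution lemma. By \cref{lem:hypcommute} and soundness of \(\GKAT\), the order of the \(\transl{h_i}\) does not matter semantically. So one can directly check \(\L(\transl{h}(g))=\L(g)\cap C_h\) for any \(g\) whose guarded strings have length one, where \(C_h\) is the constraint set of \(h\). Then
\[
\L(\transl{h}(p)) = b\diamond\L(p)\diamond c\ \cup\ \bar b\diamond\L(p),
\]
and this is exactly \(\{\alpha p\beta\mid\alpha\le b\Rightarrow\beta\le c\}\).

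Finally, \(\L_H(p)\) is, by definition, \(\L(p)\) intersected with exactly these constraints, because a word \(\alpha p\beta\) has only one action position. This closes the base case.

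I expect the main obstacle to be the nested case \(k\ge2\): making precise that applying \(\transl{h_1}\) to the already-translated expression \(g\) only refines its traces by the \(h_1\)-constraint. Only the occurrences of \(p\) inside \(g\) matter, and \(g\) never contains a loop. So the clean route is the substitution lemma for loop-free expressions, proved by induction on \(g\) using the two intersection identities stated in the preliminaries.
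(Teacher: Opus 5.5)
Your proof is correct and matches the paper's except in the base case \(e=p\), where you take a genuinely different route.

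The paper also inducts on the number of hypotheses about \(p\), but it uses \cref{lem:hypcommute} to make \(\transl{h}\) the \emph{innermost} translation. The homomorphism property alone then gives \(\transl{H\setminus h}(\transl{h}(p)) = \transl{H\setminus h}(p)\,c +_b \transl{H\setminus h}(p)\). The induction hypothesis \(\L_{H\setminus h}(p)=\L(\transl{H\setminus h}(p))\) together with \(\L_H(p) = (b\diamond\L_{H\setminus h}(p)\diamond c)\cup(\bar b\diamond\L_{H\setminus h}(p))\) finishes the case without any substitution lemma.

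You instead peel off the \emph{outermost} translation. This costs you the lemma \(\L(\transl{h}(g))=\L(g)\cap C_h\). In exchange, your computation is valid for whatever order the translations are composed in, so you never need \cref{lem:hypcommute}. That lemma is stated only as derivability in \(\GKAT+H\), and it transfers to the plain semantics \(\L\) only because its derivation happens to use pure \(\GKAT\) axioms.

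Your ``cleaner alternative'' does not actually avoid the substitution lemma: checking \(\L(\transl{h}(g))=\L(g)\cap C_h\) for arbitrary \(g\) \emph{is} that lemma. The way commutation really removes it is the paper's move of applying \(\transl{h}\) first.

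Run the lemma's induction over the syntactic shape of \(g\), i.e.\ expressions built from \(p\) and tests by \(+_b\) and right composition with a test. Do not run it over ``expressions whose guarded strings contain at most one \(p\)'', because that property is not inherited by subterms (consider \((pp)\cdot 0\)). On the syntactic shape, your two intersection identities suffice.

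Better still, take \(C_h\) to be the per-occurrence constraint and use the identity \((L_1\diamond L_2)\cap C_h=(L_1\cap C_h)\diamond(L_2\cap C_h)\) from the proof of \cref{lem:lh-hom}. Then \(\L(\transl{h}(e))=\L(e)\cap C_h\) holds for all \(e\), loops included, and iterating it over \(H\) yields the theorem directly.
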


\begin{proof}
    We proceed by induction on $e$. 
    If $e=b\in\BExp$, the result trivially holds because $\transl{H}(b)=b$ and $\L_H{(b)}=\L(b)$ by definition. 

    For $e = p\in\Sigma$, we proceed by induction on the number of hypotheses in $H$ that involve $p$. 
	If there are no hypotheses in $H$ that involve $p$, the result trivially holds. 
    Otherwise, pick any $h\in H$ of the form $bp=bpc$ for some $b,c\in\BExp$.
    As the order of translations in $\transl{H}$ does not matter (\cref{lem:hypcommute}), we assume the translation $\transl{h}$ is performed `last', i.e. $\transl{H}(e)=\transl{ h}(\transl{ H\setminus h}(e))$. 
	Our induction hypothesis states that $\L_{H\setminus h}(p)=\L(\transl{H\setminus h}(p))$.
	We prove  \(\L_H(p) = \L(T_H(p))\), as follows:
    \begin{align*}
		\L_H(p)&= \L_H(bp +_b \bar b p) \tag{basic property of $\L$, therefore also $\L_H$} \\
		&= (b\diamond \L_{H}(p))\cup (\bar b \diamond \L_{H}(p)) \tag{\cref{lem:lh-hom}} \\
		&= (b\diamond \L_{H\setminus h}(p) \diamond c)\cup (\bar b \diamond \L_{H\setminus h}(p)) \tag{from def.~$\L_{H}$, $\L_{H\setminus h}$} \\
		&= (b\diamond \L(\transl{H\setminus h}(p))\diamond c)\cup (\bar b \diamond \L(\transl{H\setminus h}(p))) \tag{Induction Hypothesis}\\
		&=(b\diamond \L(\transl{H\setminus h}(p)c))\cup (\bar b \diamond \L(\transl{H\setminus h}(p))) \tag{def.~\(\L\)}\\
		&= \L(\transl{H\setminus h}(p)c+_b \transl{H\setminus h}(p)) \tag{def.~\(\L\)}\\
		&=\L(\transl{H\setminus h}(pc+_b p)) \tag{def.~$\transl{H\setminus h}$}\\
		&= \L(\transl{H\setminus h}(\transl{h}(p))) \tag{def.~\(\transl{h}\)}\\
		&= \L(\transl{h}(\transl{H\setminus h}(p))) \tag{\cref{lem:hypcommute}} \\
		&= \L(\transl{H}(p))
    \end{align*}
	This concludes the case $e= p \in \Sigma$.
	
	For the induction step $e +_b f$, we get:
	\begin{align*}
		\L_H(e+_b f)
		&= b \diamond \L_H(e)\cup \bar b\diamond \L_H(f) \tag{\cref{lem:lh-hom}}\\
		&= b\diamond \L(\transl{H}(e))\cup \bar b \diamond \L(\transl{H}(f)) \tag{Induction Hypotheses}\\
		&= \L(\transl{H}(e)+_b \transl{H}(f)) \tag{def.~\(\L\)}\\
		&= \L(\transl{H}(e+_b f)) \tag{def.~\(\transl{H}\)}
	\end{align*}
	The other inductive cases $e \diamond f$ and $e^{(b)}$ can be proven similarly from~\cref{lem:lh-hom}.
\end{proof}

\cref{theorem:syntactic-equivalence,lemma:semantic-equivalence} lead us to our completeness and decidability theorems.
Together, they allows us to reduce equivalence of $\GKAT + H$ to $\GKAT$ equivalence.

\begin{theorem}[Completeness]
    \label{thm:hoare completeness}
    Let $e,f\in\GExp$. 
    If $\L_H(e)=\L_H{(f)}$, then $\GKAT + H \vdash e=f$. 
\end{theorem}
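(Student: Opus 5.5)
The proof reduces completeness for $\GKAT + H$ to the completeness theorem for plain $\GKAT$, using the translation $\transl{H}$ as a bridge. The chain of reasoning is short; each step invokes an earlier result.

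First, assume $\L_H(e) = \L_H(f)$. By \cref{lemma:semantic-equivalence}, applied to both $e$ and $f$, we get
\[
    \L(\transl{H}(e)) = \L_H(e) = \L_H(f) = \L(\transl{H}(f)).
\]
So the two translated expressions are language equivalent in the ordinary sense.

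Second, we invoke the completeness theorem of~\cite{SmolkaFHKKS20} recalled in \cref{sec:prelims}. It yields $\GKAT \vdash \transl{H}(e) = \transl{H}(f)$. Since every axiom of $\GKAT$ is also available in $\GKAT + H$, it follows that $\GKAT + H \vdash \transl{H}(e) = \transl{H}(f)$.

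Third, \cref{theorem:syntactic-equivalence} gives $\GKAT + H \vdash \transl{H}(e) = e$ and $\GKAT + H \vdash \transl{H}(f) = f$. Combining these with the previous equation by symmetry and transitivity of equational logic gives
\[
    \GKAT + H \vdash e = \transl{H}(e) = \transl{H}(f) = f,
\]
as required.

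The only delicate point is that $\transl{H}$ must be well defined. This needs $H$ to be finite, or at least only finitely many hypotheses to be relevant. For an arbitrary $H$, I would restrict to the finite subset $H'$ of hypotheses whose action $p$ occurs in $e$ or $f$. Hypotheses about other actions do not affect $\L_H(e)$ or $\L_H(f)$: guarded strings in $\L(e)$ only contain actions occurring in $e$, so $\L_H(e) = \L_{H'}(e)$, and likewise for $f$. Running the argument above with $H'$ gives $\GKAT + H' \vdash e = f$, which implies $\GKAT + H \vdash e = f$. Since the set of tests is finite modulo $\BA$, $H'$ may be taken finite up to $\BA$-equivalence of the tests $b, c$. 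Apart from this bookkeeping, the heavy lifting was already done in \cref{theorem:syntactic-equivalence} and \cref{lemma:semantic-equivalence}.
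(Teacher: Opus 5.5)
Your proof is correct and follows the paper's argument: \cref{lemma:semantic-equivalence} gives $\L(\transl{H}(e)) = \L(\transl{H}(f))$, completeness of $\GKAT$ gives $\GKAT \vdash \transl{H}(e) = \transl{H}(f)$, and \cref{theorem:syntactic-equivalence} brings you back to $e$ and $f$. Your extra step for infinite $H$ is also sound, namely restricting to the hypotheses that mention actions of $e$ or $f$, taken finitely many up to $\BA$; the paper leaves this implicit by tacitly assuming $H$ is finite when it defines $\transl{H}$.
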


\begin{proof}
    Let \(e,f \in \GExp\). 
    Using \cref{lemma:semantic-equivalence}, completeness of \(\GKAT\), and \cref{theorem:syntactic-equivalence},
    \begin{align*}
        \L_H(e)=\L_H{(f)} 
        &\Leftrightarrow \L{(\transl{H}(e))}=\L{(\transl{H}(f ))} \\
        &\Rightarrow \GKAT\vdash \transl{H}(e)=\transl{H}(f)  \\
        &\Leftrightarrow \GKAT+H\vdash e=f.  \tag*\qedhere
    \end{align*}
\end{proof}


\begin{theorem}[Decidability]
    \label{thm:hoare decidable}
    Let $e,f\in\GExp$. 
    It is decidable whether $\L_H(e)=\L_H(f)$. 
\end{theorem}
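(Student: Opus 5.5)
The decision procedure is a syntactic reduction to plain GKAT equivalence. Given $e,f \in \GExp$ and the set $H$ of Hoare hypotheses, we compute $\transl{H}(e)$ and $\transl{H}(f)$. By \cref{lemma:semantic-equivalence}, $\L_H(e) = \L(\transl{H}(e))$ and $\L_H(f) = \L(\transl{H}(f))$. Hence $\L_H(e) = \L_H(f)$ holds if and only if $\L(\transl{H}(e)) = \L(\transl{H}(f))$. The latter is decidable by \cref{thm:gkat decidability}, or equivalently via the complete axiomatisation $\GKAT$.

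The one thing to check is that $\transl{H}$ is effectively computable, which requires $H$ to be finite, or at least reducible to a finite set. If $H$ is given finitely, then $\transl{H}$ is a finite composition of the homomorphisms $\transl{h}$. Each $\transl{h}$ is computed by structural recursion, replacing every occurrence of $p$ by $pc +_b p$. \cref{lem:hypcommute} only matters for well-definedness. For the decision procedure we may fix any enumeration of $H$, since any order yields a term with the same language, by \cref{lemma:semantic-equivalence} and soundness.

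If $H$ is allowed to be infinite, I would first reduce it to a finite equivalent set. Since $T$ is finite, there are only finitely many tests modulo $\BA$. Moreover, only the finitely many primitive programs occurring in $e$ and $f$ matter: hypotheses about other actions do not affect $\L_H(e)$ or $\L_H(f)$, because the restriction in \cref{def:language up to H} only concerns letters occurring in words of $\L(e)$. So $H$ can be replaced by a finite set of representatives modulo $=_\BA$, without changing $\L_H(e)$ or $\L_H(f)$. I expect this finiteness bookkeeping to be the only mildly delicate step.

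For complexity, note that each translation $\transl{h}$ multiplies every occurrence of its action by a constant-size expression. Since the translations compose, the size of $\transl{H}(e)$ can grow exponentially in the number of hypotheses on the same action, roughly $|e|\cdot 3^{k}$. The procedure is therefore decidable but not nearly linear in general, consistent with the remark that a more efficient method appears in \cref{sec:words and queues}.
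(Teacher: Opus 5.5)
Your proposal is correct and follows the paper's proof: by \cref{lemma:semantic-equivalence} the question reduces to $\L(\transl{H}(e))=\L(\transl{H}(f))$, which is decidable by \cref{thm:gkat decidability} because $\transl{H}$ is computable (the paper, like your main argument, tacitly takes $H$ finite, as the inductive definition of $\transl{H}$ requires). Two of your side remarks are slightly off: each $\transl{h}$ doubles the occurrences of its action, so the blow-up is $3\cdot 2^{n_p^H}-2$ per occurrence (\cref{lem:translation size}) rather than $3^k$; and your reduction of an infinite $H$ to finitely many $=_\BA$-representatives shows only that such a set exists, not that it can be computed --- with infinitely many actions, $H=\{1p_n=1p_n0 \mid n\in S\}$ for an undecidable $S\subseteq\NN$ makes $\L_H(p_n)=\L_H(0)$ undecidable.
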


\begin{proof}
    To decide whether $\L_H(e)=\L_H(f)$, via \Cref{lemma:semantic-equivalence} it suffices to decide whether $\L(\transl{H}(e))=\L(\transl{H}(f))$. 
	This can be done using \Cref{thm:gkat decidability} and that $\transl{H}$ is computable.
\end{proof}

Let us take a moment to remark on the complexity of the decision procedure suggested by the completeness proof above.
The decision procedure reduces the problem to deciding ordinary language equivalence of two GKAT expressions \(e,f\).
We know from \cref{thm:gkat decidability} that equivalence is decidable in \(\mathcal O(n\alpha(n))\) time, where \(n = |e| + |f|\).
In the presence of a set of Hoare hypotheses \(H\), the decision procedure described in our completeness proof would therefore run in \(\mathcal O(m\alpha(m))\) time, where \(m = |\transl H(e)| + |\transl H(f)|\).
To measure the complexity of this decision procedure, we need only to compute the sizes of $\transl{H}(e)$ and \(\transl H (f)\).
We write $p\in e$ to express that $p\in\Sigma$ occurs in $e$, and let $n_p^H$ be the number of hypotheses in $H$ that contain a given action $p$. 

\begin{restatable}{lemma}{translationsize}\label{lem:translation size}
    Let $e\in\GExp$ and let $H$ be a set of Hoare hypotheses. 
	Then $$|\transl{H}(e)|=\sum_{p\in e}(3(2^{n_p^H})-3)+|e|$$
\end{restatable}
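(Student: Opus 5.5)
The plan is to read the sum $\sum_{p\in e}$ as ranging over \emph{occurrences} of actions in $e$, and to measure size as the paper does, by counting symbols. Under this count a test $d\in\BExp$ is one symbol, $+_b$ together with its guard is one symbol, and juxtaposition is not counted. This is the only convention under which $|pc +_b p| = 4 = (3\cdot 2^1 - 3) + 1$, so I fix it at the outset. The proof then splits into a local computation for a single action and a global step that uses the fact that $\transl{H}$ is a homomorphism.

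\textbf{Local step.} Fix an action $p$ and let $k = n_p^H$. Translations for hypotheses about other actions $q\neq p$ fix $p$, and by \cref{lem:hypcommute} the order of translations is immaterial. It therefore suffices to apply the $k$ translations $\transl{h_1},\dots,\transl{h_k}$ for the hypotheses mentioning $p$, one after another. Let $X_j$ be the result after $j$ of them, with $X_0 = p$. I will prove by induction on $j$ that $X_j$ contains exactly $2^j$ occurrences of $p$, and that these are its only action symbols. The key observation is that $\transl{h_{j+1}}$ acts homomorphically on $X_j$: it fixes tests and the operators, and replaces each occurrence of $p$ by $pc_{j+1} +_{b_{j+1}} p$. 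Each such replacement adds exactly $3$ symbols (one $c$, one $+_b$, one extra $p$) and doubles that occurrence of $p$. This gives the recurrence
\[
|X_{j+1}| = |X_j| + 3\cdot 2^{j},
\]
and hence
\[
|X_k| = 1 + 3(2^k - 1) = \bigl(3(2^k) - 3\bigr) + 1 .
\]

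\textbf{Global step.} I proceed by structural induction on $e$, using that $\transl{H}$ is homomorphic. Tests and operator symbols are preserved, so they contribute exactly as in $|e|$. Each occurrence of an action $p$ is replaced by $\transl{H}(p) = X_{n_p^H}$, whose size exceeds $1$ by exactly $3(2^{n_p^H}) - 3$. Summing these excesses over all occurrences gives the stated formula.

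\textbf{Main obstacle.} The arithmetic is routine; the delicate part is the size convention. The formula holds only if guards and $+_b$ are counted as described above, so I would state that convention explicitly before the induction. A smaller point is that the translation of $X_j$ must be taken literally, without any simplification up to $\GKAT$. \cref{lem:hypcommute} only gives provable equality, not syntactic identity, so I would fix a particular order of applying the translations and compute the size for that order. The size is independent of the chosen order anyway, since the recurrence depends only on the number of hypotheses applied.
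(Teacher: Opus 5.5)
Your proof is correct and follows the paper's structure. Both arguments use structural induction on $e$ via the homomorphism $\transl{H}$, handle the action case by induction on $n_p^H$, and rely on the same implicit size convention and occurrence-wise reading of the sum, which the paper's inductive step also needs.

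The only difference is the action case. The paper swaps the translation order via \cref{lem:hypcommute} and applies the new translation innermost, getting $|\transl{H}(p)| = |\transl{H\setminus h}(p)c +_b \transl{H\setminus h}(p)| = 2|\transl{H\setminus h}(p)|+2$. You apply it outermost and count the $2^j$ occurrences of $p$, giving $|X_{j+1}| = |X_j| + 3\cdot 2^j$. This route works for any order, so it avoids the reordering step that the paper justifies only by saying it ``does not affect the size''.
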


It follows that the decision procedure suggested by the completeness proof runs in exponential time in the worst case.
Fortunately, a much more efficient decision procedure exists that applies to a slightly broader class of hypotheses, although the same completeness proof technique does not appear to apply.
The larger class of hypotheses and their corresponding decision procedure is the subject of the next section.

\section{Decidability for Word Hypotheses}
\label{sec:words and queues}

In this section we consider a generalisation of Hoare hypotheses that we call \emph{word hypotheses}. 
We present a decision procedure for program equivalence under word hypotheses that maintains the nearly linear time complexity of GKAT. 
The decision procedure is based on a construction that adds a queue-like data structure (called an \emph{inspection queue}) to \emph{GKAT automata}, which are the operational models of GKAT (called \emph{\(G\)-coalgebras} in~\cite{SmolkaFHKKS20}).
 
As we will see, Hoare hypotheses are all instances of word hypotheses, so the presented decision procedure is a more efficient decision procedure for Hoare hypotheses than the one suggested by \cref{thm:hoare decidable}. 
Essentially, the automata-based decision procedure avoids computing the reduction map $\transl{H}$, which led to an exponential blow-up.

\subparagraph*{Word hypotheses.}
Recall that a guarded string is an element of \((\At \cdot \Sigma)^* \cdot \At\).
We call an element of \((\At \cdot \Sigma)^*\) a \emph{pre-guarded string}, because each of these is a prefix of a guarded string.
Roughly, a \emph{word hypothesis} is a constraint on the guarded strings admissible in a language that begin with a given pre-guarded string.

\begin{definition}
    \label{def:word hypotheses}
    A \emph{word hypothesis} is a formal identity of the form \(w = wc\) for some alternating concatenation of tests and atomic programs \(w \in (\BExp \cdot \Sigma)^*\) and some test \(c \in \BExp\).
    A guarded string \(\alpha_0 p_1 \alpha_1 \cdots \alpha_{l-1} p_l \alpha_l\) \emph{violates} \(w = wc\) if \(w = b_0 p_1 b_1 \cdots b_{l-1} p_l\), \(\alpha_i \le_{\BA} b_i\) for all \(i < l\), and \(\alpha_l \le_\BA \bar c\).
    A guarded string \emph{violates} a set of word hypotheses \(H\) if it violates a word hypothesis in \(H\).
\end{definition}

\begin{example}
In programming terms, a word hypothesis states a certain relationship between runs of a program and a post-condition.
For a simple concrete example, consider the following alternating sequence of tests and programs involving integer variables \(\mathtt{x}\) and \(\mathtt{y}\): 
\begin{equation}
    \label{eq:concrete eg}
    w = \mathtt{\blue{(y > 1)}(x \leftarrow x * y)\blue{(x > 0)}(x \leftarrow x * y)}
\end{equation}
The bracketed expressions involving the \(\mathtt{>}\) symbol are tests and the expressions involving \(\leftarrow\) are assignments, so \(w\) can appear on the left-hand-side of a word hypothesis.
Since \(\mathtt{x}\) and \(\mathtt{y}\) are integer variables, \(\mathtt{\blue{(x > 0)}}\) and \(\mathtt{\blue{(y > 1)}}\) imply that the values of both are at least \(1\) and \(2\) respectively. 
One word hypothesis that would be reasonable in this example is \(w = w\mathtt{\blue{(x > 4)}}\), since running \(w\) multiplies \(\mathtt{x}\) (which is at least \(1\)) by \(\mathtt{y}\) (which is at least \(2\)) twice.
Given any atomic test \(\blue{\alpha} \in \At\) that implies both \(\mathtt{\blue{(x == 1)}}\) and \(\mathtt{\blue{(y == 2)}}\), the guarded string 
\[
    \mathtt{\blue{\alpha} (x \leftarrow x * y) \blue{\alpha} (x \leftarrow x * y) \blue{\alpha}}
\]
violates \(w = w\mathtt{\blue{(x > 4)}}\), because \(\blue{\alpha}\) passes the tests \(\blue{(x > 0)}\) and \(\blue{(y > 1)}\), but it fails \(\blue{(x > 4)}\).
\end{example}

Given a guarded string \(w = \alpha_0 p_1 \cdots p_n \alpha_n\), we call a consecutive string of characters \(\alpha_i p_{i+1}  \cdots  p_j \alpha_{j}\), for \(0 \le i < j \le n\), a \emph{guarded substring} of \(w\).

\begin{definition}
    \label{def:language up to word hyp}
    Let $H$ be a set of word hypotheses.
    Given \(e \in \GExp\), we define
    \[
        \L_H(e) = \L(e)\cap \{ w\mid \text{no guarded substring of } w \text{ violates H}\} 
    \]
    The set of guarded strings \(\L_H(e)\) is the \emph{language semantics up to \(H\)} of \(e\).
    For any \(e,f \in \GExp\), if \(\L_H(e) = \L_H(f)\), then we say that \(e\) and \(f\) are \emph{language equivalent up to \(H\)}.
\end{definition}

Let us define the \emph{length} \(\len(w)\) of a (pre)guarded string \(w\) to be the number of primitive programs that appear in it.
That is, \(\len(\alpha_0 p_1 \cdots p_n \alpha_n) = n = \len(\alpha_0 p_1 \cdots p_n)\).
We take the length \(\len(w = wc)\) of a word hypothesis \(w = wc\) to be the length of \(w\).
In this terminology, Hoare hypotheses are word hypotheses of length \(1\), and the semantics of GKAT expressions up to word hypotheses is a direct generalisation of the semantics of GKAT expressions up to Hoare hypotheses from \cref{def:language up to H}.
This justifies our reuse of the notation \(\L_H\) for both notions of language semantics.
Our next (and final) task of the paper is to show that language equivalence up to \(H\) is efficiently decidable.

\subparagraph*{Inspection Queues.}
The core of our automata-based decision procedure is the notion of a \emph{bounded inspection queue}, which is a data structure built on the set of nonempty pre-guarded strings of some bounded length.
Intuitively, a bounded inspection queue is a queue with bounded capacity that evicts the head of the queue when an item is enqueued and the queue is at maximum capacity.
To state the definition precisely, we write \((\At \cdot \Sigma)^{\le l}\) for the set of pre-guarded strings of length at most \(l\) and define the following two functions.
\begin{itemize}
    \item 
    The \emph{inspection function} is the map \(\insp \colon (\NN \cup \{\infty\}) \times (\At \cdot \Sigma)^* \to (\At \cdot \Sigma)^*\) defined by
    \[
        \insp(n, w) 
        = \begin{cases}
            v & \text{\(w = uv\) and \(\len(v) = n\)} \\
            w & \text{\(\len(w) < n\)}
        \end{cases} 
    \]
    for any \(n \in \NN \cup \{\infty\}\) and pre-guarded string \(w\).
    Intuitively, \(\insp(n, -)\) returns the last \(n\) atomic test-action pairs in the input string.

    \item 
    For any given \emph{queue length} \(l \in \NN \cup \{\infty\}\), the \emph{\(l\)-bounded enqueue function} is the map \(\enq_l \colon (\At \cdot \Sigma) \times (\At \cdot \Sigma)^* \to (\At \cdot \Sigma)^{\le l}\) defined by 
    \(
        \enq_l(\alpha p, w) = \insp(l, w\alpha p)
    \).
    Intuitively, \(\enq_l\) is the enqueueing map for a \emph{bounded queue} of height \(l\).
\end{itemize}
For an easy example, \(\insp(\infty, w) = w\) always.
For a less trivial example, in the process of \(2\)-bounded enqueueing \(\beta q\) to the pre-guarded string \(\alpha_0 p_1 \alpha_1 p_2\) (which is of length \(2\)), the \(\alpha_0p_1\) portion of the queue is deleted, and the state of the queue becomes the string \(\alpha_1 p_2 \beta q\).
\[
    \enq_2(\beta q, \alpha_0 p_1 \alpha_1 p_2 ) 
    = \insp(2, {\alpha_0 p_1}\alpha_1 p_2 \beta q)
    = \alpha_1 p_2 \beta q .
\]
Our operational semantics for GKAT with word hypotheses essentially attaches a bounded queue to the standard operational model of GKAT expressions.

\begin{definition}[\cite{SmolkaFHKKS20}]
    \label{def:gkat automaton}
    A \emph{GKAT-automaton} is a pair \((X, \delta)\) consisting of a set \(X\) of \emph{states} and a \emph{transition function} \(\delta \colon \At \times X \to 2 + (\Sigma \times X)\). 
    Here, \(2 = \{\bot, \top\}\), and \(\bot\) and \(\top\) are called \emph{reject} and \emph{accept} respectively.

    Given \(\alpha \in \At\), \(p \in \Sigma\), and a GKAT-automaton \((X, \delta)\) with states \(x,y \in X\), we write 
    \(x \tr{\alpha \mid p} y\) if \(\delta(\alpha, x) = (p, y)\), 
    \(x \Rightarrow \alpha\) if \(\delta(\alpha, x) = \top\), and
    \(x \downarrow \alpha\) if \(\delta(\alpha, x) = \bot\).
    In diagrams, we will not include depictions of \(x \downarrow \alpha\) transitions.

    A guarded string \(\alpha_0 p_1 \cdots p_n \alpha_n\) is \emph{accepted} by a state \(x \in X\) if there is a path 
    \[
        x \tr{\alpha_0 \mid p_1} x_1 \tr{\alpha_1 \mid p_2} \cdots \tr{\alpha_{n-1} \mid  p_n} x_n \Rightarrow \alpha_n,
    \]
    that is, there is a path starting from \(x\) that reads \(\alpha_0 p_1 \cdots p_n\) and then arrives at a state that accepts the atomic test \(\alpha_n\).
    The \emph{language accepted by \(x\)} is given by 
    \[
        \L((X, \delta), x) = \{w \in (\At \cdot \Sigma)^*\cdot \At \mid \text{\(x\) accepts \(w\)}\}.
    \]
\end{definition}

Given a set of word hypotheses \(H\), we use a queue to store a pre-guarded string \(w\) which represents (part of) the sequence of atomic tests and primitive programs that an automaton is reading as it runs.
In particular, the automaton should ``get stuck'' if the state of the queue violates a word hypothesis in \(H\).

\begin{definition}
    \label{def:stuck}
    Let \(H\) be a set of word hypotheses.
    Given \(\alpha \in \At\) and \(w \in (\At \cdot \Sigma)^*\), we say that \emph{\(\alpha\) is locked at \(w\)} if there is an \(n \in \NN\) s.t.\ the guarded string \(\insp(n, w)\alpha\) violates \(H\).
    That is, \(\alpha\) is locked at \(w\) if there is a suffix \(u\) of \(w\) s.t.\ \(u\alpha\) violates some word hypothesis in \(H\).
\end{definition}

We now define the operational model of GKAT with word hypotheses.
The idea is to take a set of word hypotheses \(H\) and a GKAT automaton \((X, \delta)\) and ``explode'' \((X, \delta)\) into the set of all (\emph{program-state}, \emph{queue-state}) pairs. 
This allows us to force a deadlock whenever a hypothesis is violated; more precisely, the transition function only retains those transitions for which the associated atomic test is not locked at the current queue state.

\begin{definition}
    \label{def:bdd queue automata}
    Let \(H\) be a set of word hypotheses and fix the queue length \(l = \max\{\len(w) \mid (w = wc) \in H\}\) to be the maximum length of a word hypothesis in \(H\) (possibly, \(l = \infty\), which occurs when $H$ is infinite).
    Given a GKAT automaton \((X, \delta)\), we define the \emph{inspection queue automaton} by \(\Queue_H(X, \delta) = ((\At \cdot \Sigma)^{\le l} \times X, \delta_H)\), where the transition function 
    \(\delta_H \colon \At \times ((\At \cdot \Sigma)^{\le l} \times X) \to 2 + \Sigma \times ((\At \cdot \Sigma)^{\le l} \times X)\) is defined by
    \begin{equation*}
        \label{eq:queue automaton delta}
        \delta_H(\alpha, (u, x)) 
        = \begin{cases}
            \bot
            &\text{\(\alpha\) is locked at \(u\)} \\
            (p, (\enq_l(\alpha p, u), y))
            &\text{\(\delta(\alpha, x) = (p, y)\) and \(\alpha\) is not locked at \(u\)} \\
            \delta(\alpha, x) 
            &\text{otherwise}
        \end{cases}
    \end{equation*}
    where \(\alpha \in \At\), \(p \in \Sigma\), \(u \in (\At \cdot \Sigma)^{\le l}\), and \(x,y \in X\).
\end{definition}

\noindent
In the third case in the case distinction above, note that if \(\alpha\) is not locked at \(u\) and \(\delta(\alpha, x) \notin \Sigma \times X\), then either \(\delta_H(\alpha, x) = \top\) or \(\delta_H(\alpha, x) = \bot\).

To prove correctness of the construction of \(\Queue_H(X, \delta)\), we use two auxiliary lemmas. 
\cref{lem:path characterization} characterises the inspection queue during runs, and \cref{lem:pathtopath} relates paths in an automaton $(X,\delta)$ to paths in the associated inspection queue automaton \(\Queue_H(X, \delta)\).

\begin{restatable}{lemma}{lemmaPathCharacterization}
    \label{lem:path characterization}
    Let \((X, \delta)\) be a GKAT-automaton, let \(H\) be a set of word hypotheses, and let \(l = \max\{\len(w) \mid (w = wc) \in H\}\). 
    Given \(x \in X\) and $n\geq 0$, consider a path
    \begin{equation*}
        (w_0, x) 
        \tr{\alpha_0 \mid p_1} (w_1, x_1)
        \tr{\alpha_1 \mid p_2} \cdots 
        \tr{\alpha_{n-1} \mid p_n} (w_n, x_n)
    \end{equation*}
    in \(\Queue_H(X, \delta)\).
    Then \(w_n = \insp(l, w_0\alpha_0 p_1 \cdots \alpha_{n-1} p_n)\).
\end{restatable}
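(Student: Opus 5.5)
The plan is to argue by induction on the length \(n\) of the path, using one auxiliary identity about the inspection function. The identity says that inspecting twice is the same as inspecting once:
\[
    \insp(l, \insp(l, w)\, v) = \insp(l, w v)
\]
for all pre-guarded strings \(w, v \in (\At\cdot\Sigma)^*\).

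I would prove this identity first, by cases.
\begin{itemize}
    \item If \(l = \infty\), or \(\len(w) < l\), then \(\insp(l, w) = w\) and the identity is immediate.
    \item Otherwise write \(w = u w'\) with \(\len(w') = l\), so that \(\insp(l, w) = w'\). Then \(w' v\) and \(w v = u w' v\) are both of length at least \(l\), and they share the suffix \(w' v\). The last \(l\) test-action pairs of \(w v\) therefore lie inside \(w' v\), so both sides equal the length-\(l\) suffix of \(w' v\).
\end{itemize}
Here I use that \(\insp(n,-)\) is well defined on pre-guarded strings, since they decompose uniquely into test-action pairs.

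For the base case \(n = 0\), the claim is \(w_0 = \insp(l, w_0)\). This holds because the states of \(\Queue_H(X,\delta)\) have queue component in \((\At\cdot\Sigma)^{\le l}\), so \(\len(w_0) \le l\). If \(\len(w_0) < l\), the second clause of \(\insp\) applies. If \(\len(w_0) = l\), the first clause applies with the decomposition \(u = \varepsilon\) and \(v = w_0\).

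For the inductive step, consider the last transition \((w_n, x_n) \tr{\alpha_n \mid p_{n+1}} (w_{n+1}, x_{n+1})\). By \cref{def:bdd queue automata}, the only clause of \(\delta_H\) that produces an output in \(\Sigma \times ((\At\cdot\Sigma)^{\le l}\times X)\) is the second one. The first clause gives \(\bot\). The third gives \(\delta(\alpha,x)\), which in that case lies in \(2\) by the remark after the definition. Hence \(w_{n+1} = \enq_l(\alpha_n p_{n+1}, w_n) = \insp(l, w_n \alpha_n p_{n+1})\). The induction hypothesis gives \(w_n = \insp(l, w_0\alpha_0 p_1\cdots p_n)\). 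Substituting this and applying the identity with \(v = \alpha_n p_{n+1}\) yields the claim for \(n+1\).

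The only mildly delicate point is the suffix identity for \(\insp\), in particular handling the boundary between the two clauses of its definition and the case \(l = \infty\). The rest is a direct unfolding of the definitions.
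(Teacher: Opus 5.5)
Your proof is correct and follows the paper's argument. The paper also inducts on $n$, gets the base case from $w_0 \in (\At\cdot\Sigma)^{\le l}$, and obtains the step by unfolding $\delta_H$ and $\enq_l$ and then applying the induction hypothesis. The identity $\insp(l,\insp(l,w)\,v) = \insp(l,wv)$ that you prove separately is exactly what the paper uses without comment in its final ``ind.~hyp.'' step, so your version is a more careful rendering of the same proof.
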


\begin{restatable}{lemma}{lemmaPathtoPath}
    \label{lem:pathtopath}
    Let \((X, \delta)\) be a GKAT-automaton, let \(H\) be a set of word hypotheses, and let \(l = \max\{\len(w) \mid (w = wc) \in H\}\).  
    Suppose there is a path
    \[
        x_s \tr{\alpha_0\mid p_1}\cdots \tr{\alpha_{n-1} \mid p_n} x_n \Rightarrow \alpha_{n}
    \] 
  and $\alpha_{j}$ is not locked at $u\alpha_0p_1\cdots \alpha_{j-1} p_j$ for all $0\leq j\leq n$ and a given $u\in (\At\cdot\Sigma)^{\leq l}$. 
    Then the following path exists in \(\Queue_H(X, \delta)\):
    \[
        (u, x_s) 
        \tr{\alpha_0 \mid p_1} 
        \cdots 
      \tr{\alpha_{n-1}\mid p_n} (\insp(l,u\cdot\alpha_0p_1\cdots \alpha_{n-1} p_n), x_n) 
        \Rightarrow \alpha_{n}.
    \]
\end{restatable}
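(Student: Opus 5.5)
The proof is an induction on $n$, the number of action steps in the given path. We build the path in $\Queue_H(X,\delta)$ one transition at a time and use \cref{lem:path characterization} to track the queue component. It is convenient to strengthen the statement slightly. For every $0 \le k \le n$ we show that there is a path
\[
    (u, x_s) \tr{\alpha_0\mid p_1} \cdots \tr{\alpha_{k-1}\mid p_k} (w_k, x_k)
\]
in $\Queue_H(X,\delta)$, where $x_0 = x_s$ and $w_k = \insp(l, u\alpha_0 p_1\cdots\alpha_{k-1}p_k)$. For $k=0$ the path is empty, and $w_0 = u = \insp(l,u)$ because $u \in (\At\cdot\Sigma)^{\le l}$, so $\len(u) \le l$.

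For the step from $k$ to $k+1$ with $k<n$, the original path gives $\delta(\alpha_k, x_k) = (p_{k+1}, x_{k+1})$. By hypothesis, $\alpha_k$ is not locked at $u\alpha_0p_1\cdots\alpha_{k-1}p_k$. The key observation is the following transfer lemma: if $\alpha$ is not locked at a pre-guarded string $v$, then $\alpha$ is not locked at $\insp(l,v)$. This holds because every suffix of $\insp(l,v)$ is also a suffix of $v$. Applying it with $v = u\alpha_0p_1\cdots\alpha_{k-1}p_k$ shows that $\alpha_k$ is not locked at $w_k$. The second clause of $\delta_H$ then gives the transition $(w_k,x_k)\tr{\alpha_k\mid p_{k+1}} (\enq_l(\alpha_kp_{k+1}, w_k), x_{k+1})$. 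By \cref{lem:path characterization}, or directly from the identity $\insp(l,\insp(l,v)\alpha p) = \insp(l, v\alpha p)$, the new queue is $\insp(l, u\alpha_0p_1\cdots\alpha_kp_{k+1})$.

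At $k=n$ we have $\delta(\alpha_n,x_n)=\top$, and by hypothesis with $j=n$, $\alpha_n$ is not locked at $u\alpha_0p_1\cdots\alpha_{n-1}p_n$. The transfer lemma again shows that $\alpha_n$ is not locked at $w_n$. The first clause of $\delta_H$ therefore does not apply, and neither does the second, since $\delta(\alpha_n,x_n) \notin \Sigma\times X$. So the third clause gives $\delta_H(\alpha_n,(w_n,x_n)) = \top$, that is, $(w_n,x_n)\Rightarrow\alpha_n$.

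The main point needing care is the transfer lemma. It requires that \emph{every} suffix of the truncated queue is a suffix of the full history, which is immediate when $l$ is finite. When $l=\infty$, $\insp(\infty,v)=v$ and the lemma is trivial. The converse direction, that truncation loses no relevant suffix, is not needed here, since we only go from ``not locked at the full history'' to ``not locked at the queue''. The remaining ingredient is the identity $\insp(l,\insp(l,v)\alpha p)=\insp(l,v\alpha p)$. It holds because both sides equal the last $\min(l,\len(v)+1)$ test–action pairs of $v\alpha p$, and it is presumably the same fact that underlies \cref{lem:path characterization}.
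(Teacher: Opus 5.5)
Your proof is correct and follows essentially the same route as the paper. Both argue by induction along the path: non-lockedness at the full history $u\alpha_0p_1\cdots\alpha_{j-1}p_j$ is transferred to the truncated queue, so the second clause of $\delta_H$ fires at each action step and the third clause gives acceptance of $\alpha_n$. The differences are presentational, and they favour your version. You induct forward over prefixes with the explicit invariant $w_k = \insp(l, u\alpha_0p_1\cdots\alpha_{k-1}p_k)$, and you state the transfer lemma and the identity $\insp(l,\insp(l,v)\alpha p) = \insp(l, v\alpha p)$ outright. The paper instead peels off the first step and applies the induction hypothesis from the queue $u\alpha_0p_1$, which silently glosses over the truncation to $\insp(l,u\alpha_0p_1)$.
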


\begin{theorem}[Correctness]
    \label{thm:correctness of queue model}
    Let \(H\) be a set of word hypotheses, \((X, \delta)\) be a GKAT automaton, and write \(U_H = \{w \mid \text{no guarded substring of \(w\) violates \(H\)}\}\).
    Then we have
    \begin{equation}
        \label{eq:queue language}
        \L(\Queue_H(X, \delta), (\varepsilon, x)) = \L((X, \delta), x) \cap U_H
    \end{equation}
\end{theorem}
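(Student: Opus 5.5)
We prove the two inclusions of \eqref{eq:queue language} separately. Both rest on one observation linking ``locked'' with ``violating guarded substring''. For a guarded string $w = \alpha_0 p_1 \cdots p_n \alpha_n$, every guarded substring ending at position $j$ has the form $\alpha_i p_{i+1}\cdots p_j\alpha_j$ with $i<j$. Such a substring violates $H$ exactly when some suffix of the pre-guarded string $\alpha_0p_1\cdots\alpha_{j-1}p_j$, followed by $\alpha_j$, violates a hypothesis in $H$. By \cref{def:stuck}, this is exactly the condition that $\alpha_j$ is locked at $\alpha_0p_1\cdots\alpha_{j-1}p_j$. Hence $w\in U_H$ iff for every $0\le j\le n$, $\alpha_j$ is not locked at $\alpha_0p_1\cdots\alpha_{j-1}p_j$. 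For $j=0$ the prefix is $\varepsilon$, and the case is vacuous since $\varepsilon\alpha_0$ has length $0$ and cannot violate a hypothesis of length $\ge 1$; length-$0$ hypotheses would need separate but trivial treatment.

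The second ingredient is that lockedness only depends on the last $l$ test--action pairs. A violating suffix $u$ of $w$ must satisfy $\len(u)=\len(v)$ for some hypothesis $v=vc\in H$, so $\len(u)\le l$. Therefore $\alpha$ is locked at $w$ iff it is locked at $\insp(l,w)$. When $l=\infty$ this is trivial.

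($\supseteq$) Take $w=\alpha_0p_1\cdots p_n\alpha_n\in\L((X,\delta),x)\cap U_H$. Then there is an accepting path in $(X,\delta)$ from $x$. By the observation, $\alpha_j$ is not locked at $\alpha_0p_1\cdots\alpha_{j-1}p_j$ for every $j$. Apply \cref{lem:pathtopath} with $u=\varepsilon$ to obtain an accepting path from $(\varepsilon,x)$ in $\Queue_H(X,\delta)$, so $w\in\L(\Queue_H(X,\delta),(\varepsilon,x))$.

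($\subseteq$) Take an accepting path $(\varepsilon,x)\tr{\alpha_0\mid p_1}(w_1,x_1)\cdots(w_n,x_n)\Rightarrow\alpha_n$ in the queue automaton. Inspecting the definition of $\delta_H$, each transition $\tr{\alpha_{j}\mid p_{j+1}}$ requires $\delta(\alpha_j,x_j)=(p_{j+1},x_{j+1})$ and that $\alpha_j$ is not locked at $w_j$. Likewise, acceptance requires $\delta(\alpha_n,x_n)=\top$ and that $\alpha_n$ is not locked at $w_n$. Projecting to second components gives an accepting path in $(X,\delta)$, so $w\in\L((X,\delta),x)$. By \cref{lem:path characterization}, $w_j=\insp(l,\alpha_0p_1\cdots\alpha_{j-1}p_j)$. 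By the second ingredient, $\alpha_j$ is therefore not locked at the full prefix for every $j$. The observation then gives $w\in U_H$.

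The main obstacle is stating the observation cleanly. In particular, the indexing between ``suffix $u$ with $u\alpha$ violating'' and ``guarded substring $\alpha_i\cdots\alpha_j$'' must match, including the requirement $i<j$ and the length bound $\len(u)\le l$ that justifies truncation by $\insp(l,-)$. Once this correspondence is established, both inclusions are direct bookkeeping using \cref{lem:path characterization,lem:pathtopath}.
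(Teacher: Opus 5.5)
Your proof is correct and follows the paper's argument. For $\subseteq$, you project accepting paths of $\Queue_H(X,\delta)$ to $(X,\delta)$ and combine \cref{lem:path characterization} with the bound $\len(u)\le l$ on violating suffixes. For $\supseteq$, you derive non-lockedness at every prefix from the absence of violating guarded substrings and apply \cref{lem:pathtopath} with $u=\varepsilon$. You argue $\subseteq$ directly rather than by contradiction, and your indexing is cleaner than the paper's.

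One correction to your side remark: a length-$0$ hypothesis $\varepsilon=\varepsilon c$ cannot be handled by ``separate but trivial treatment''. Via $\insp(0,u)=\varepsilon$, it locks every $\alpha\le_\BA\bar c$ at every queue state. Yet no guarded substring ever violates it, since guarded substrings require $i<j$. So the identity genuinely fails in that case, and one must assume all hypotheses have length at least $1$, as the paper tacitly does.
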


\begin{proof}
    We show~\eqref{eq:queue language} by arguing inclusion both ways. Let \(l = \max\{\len(w) \mid (w = wc) \in H\}\).

    (\(\subseteq\)) In the forward direction, let \(w = \alpha_0 p_1 \cdots p_n \alpha_n \in \L(\Queue_H(X, \delta), (\varepsilon, x))\). 
    Then by definition, there is a path 
    \begin{equation}
        \label{eq:proof:queue automaton path}
        (\varepsilon, x) 
        \tr{\alpha_0 \mid p_1} (w_1, x_1)
        \tr{\alpha_1 \mid p_2} \cdots 
        \tr{\alpha_{n-1} \mid p_n} (w_n, x_n) \Rightarrow \alpha_n
    \end{equation}
    This implies the existence of a path 
    \begin{equation*}
        x
        \tr{\alpha_0 \mid p_1} x_1
        \tr{\alpha_1 \mid p_2} \cdots 
        \tr{\alpha_{n-1} \mid p_n} x_n \Rightarrow \alpha_n
    \end{equation*}
    in \((X, \delta)\), so \(w \in \L((X, \delta), x)\).
    It now suffices to show that there is no guarded substring of \(w\) that violates \(H\).
    Suppose towards a contradiction that there is a guarded substring \(\alpha_i p_{i+1} \cdots \alpha_{j-1} p_j \alpha_j\) of \(w\) that \emph{does} violate \(H\). 
    Then, since this means there is a word hypothesis in \(H\) that \(\alpha_i p_{i+1} \cdots \alpha_{j-1} p_j \alpha_j\) violates, then \(l \ge j - i\).
    From \cref{lem:path characterization} we obtain that  \(w_j = \insp(l, \alpha_0 p_1 \cdots \alpha_{j-1}p_j)\), and together with $l\geq j-i$ this implies that \(\alpha_i p_{i+1} \cdots \alpha_{j-1} p_j\) is a suffix of \(w_j\). 
    Because \(\alpha_i p_{i+1} \cdots \alpha_{j-1} p_j \alpha_j\) violates some word hypotheses in $H$, we get that $\alpha_j$ is locked at $w_j$. 
    Hence, $\delta_H(\alpha_j,(w_j, x_j))=\bot$.
    This contradicts~\eqref{eq:proof:queue automaton path}, so no such pre-guarded substring can exist. 
    This concludes the forward inclusion.

    (\(\supseteq\)) For the reverse inclusion,
    let \(w \in \L((X, \delta), x)\), assume that no guarded substring of \(w\) violates \(H\), and write \(w = \alpha_0 p_1 \cdots p_n \alpha_n\).
    Again, immediately, we obtain a path 
    in \((X, \delta)\):
    \begin{equation*}
        x
        \tr{\alpha_0 \mid p_1} x_1
        \tr{\alpha_1 \mid p_2} \cdots 
        \tr{\alpha_{n-1} \mid p_n} x_n \Rightarrow \alpha_n.
    \end{equation*}
    Now, for each \(j \le n\) and \(k \in \NN\), 
    \(\insp(k, \alpha_1 p_1 \cdots \alpha_j p_j) \alpha_{j+1}\) is a subword of $w$ and therefore does not violate \(H\) by assumption. By definition, \(\alpha_{j+1}\) is not locked at \(\alpha_1 p_1 \cdots \alpha_j p_j\).
    From \Cref{lem:pathtopath} it follows that there is the path
    \[
        (\varepsilon, x) 
        \tr{\alpha_0 \mid p_1} 
        \cdots 
        \tr{\alpha_{n-1}\mid p_n} (\insp(l,\alpha_1p_1\cdots \alpha_n p_n), x_n)
        \Rightarrow \alpha_{n+1}
    \]
    in \(\Queue_H(X, \delta)\). Therefore, \(w\) is accepted by the state \((\varepsilon, x)\) in \(\Queue_H(X, \delta)\).
\end{proof}

\cref{thm:correctness of queue model} immediately leads us to the following key result behind our decision procedure.

\begin{corollary}
    \label{corr:states equiv up to H}
    Let \(H\) be a set of word hypotheses, let \(e \in \GExp\), and let \((X, \delta)\) be a GKAT-automaton with a state \(x \in X\) such that 
    \(\L(e) = \L((X, \delta), x)\).
    Then \[
        \L_H(e) = \L(\Queue_H(X, \delta), (\varepsilon, x))
    \]
\end{corollary}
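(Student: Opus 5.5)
This is a direct chaining of definitions with \cref{thm:correctness of queue model}. We unfold $\L_H(e)$ via \cref{def:language up to word hyp}, which gives $\L_H(e) = \L(e) \cap U_H$. Here $U_H$ is exactly the set of guarded strings none of whose guarded substrings violate $H$, as in the statement of \cref{thm:correctness of queue model}.

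Next we substitute the hypothesis $\L(e) = \L((X,\delta),x)$, obtaining $\L_H(e) = \L((X,\delta),x) \cap U_H$. Applying \cref{thm:correctness of queue model} (equation \eqref{eq:queue language}) with the given automaton $(X,\delta)$ and state $x$ then rewrites the right-hand side as $\L(\Queue_H(X,\delta),(\varepsilon,x))$, which is the claim.

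There is no real obstacle. The only point to check is that the set $U_H$ used in \cref{def:language up to word hyp} coincides literally with the one in \cref{thm:correctness of queue model}. It does: both quantify over all guarded substrings of the word and use the same notion of violation from \cref{def:word hypotheses}.

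The case $l=\infty$ (infinite $H$) needs no separate treatment, since the correctness theorem is stated for arbitrary sets of word hypotheses. Finiteness of $H$ matters only later, for the decision procedure.
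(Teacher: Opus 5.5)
Your proposal is correct and is exactly the paper's argument: unfold $\L_H(e) = \L(e) \cap U_H$, substitute the assumption $\L(e) = \L((X,\delta),x)$, and rewrite with \cref{thm:correctness of queue model} to obtain $\L(\Queue_H(X,\delta),(\varepsilon,x))$. Your remarks that the two occurrences of $U_H$ coincide and that infinite $H$ needs no special treatment are also accurate.
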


\begin{proof}
    Write \(U_H = \{w \mid \text{no guarded substring of \(w\) violates \(H\)}\}\).
    Then, using the definition of $\L_H$, the assumption, and \cref{thm:correctness of queue model} respectively, in that order,
    \begin{align*}
        \L_H(e) = \L(e) \cap U_H  
        = \L((X, \delta_X), x) \cap U_H
        = \L(\Queue_H(X, \delta_X), (\varepsilon, x)). \tag*{\qedhere}
    \end{align*}
\end{proof}

\subparagraph*{Decidability of word hypotheses.}
We are now ready to state our decidability result with a complexity bound that improves on the one offered by the decidability-via-translation construction used in the Hoare hypothesis case (see \cref{thm:hoare decidable,lem:translation size}).
\cref{corr:states equiv up to H} tells us that to decide whether \(\L_H(e) = \L_H(f)\), for \(e,f \in \GExp\), it suffices to check the language equivalence of states in the inspection queue automata that are constructed from the operational (automaton) semantics of \(e\) and \(f\).
This is the crux of the improved decision procedure, which proceeds in three steps as follows.
Given \(e, f \in \GExp\),
\begin{enumerate}
    \item \label{step1} find GKAT-automata \((X, \delta_X)\) and \((Y, \delta_Y)\) with states \(x \in X\) and \(y \in Y\) such that \(\L(e) = \L((X, \delta_X), x)\) and \(\L(f) = \L((Y, \delta_Y), y)\);
    
    \item \label{step2} compute \(\Queue_H(X, \delta_X)\) and \(\Queue_H(Y, \delta_Y)\);
    
    \item \label{step3} return the result of deciding \(\L(\Queue_H(X, \delta_X), (\varepsilon, x)) = \L(\Queue_H(Y, \delta_Y), (\varepsilon, y))\).
\end{enumerate} 
It is clear from \cref{corr:states equiv up to H} that this procedure is correct. 
We are therefore left with analysing the complexity and justifying the claim that it improves on our previous algorithm.
We proceed by analysing each of the three steps individually.

Step~\ref{step1} can be accomplished in a number of different ways~\cite{SmolkaFHKKS20,SchmidKK021}.
In the cited work, it is shown that for any \(e \in \GExp\), a GKAT-automaton \((X, \delta)\) and a state \(x \in X\) can be found in \(\bigO(|e|)\) time such that \(\L(e) = \L((X, \delta), x)\) and \(|X| \in \bigO(|e|)\).
Therefore, if we define \(n = |e| + |f|\), then Step~\ref{step1} can be completed in \(\bigO(n)\) time and the automata \((X, \delta_X)\) and \((Y, \delta_Y)\) that were found have \(\bigO(n)\) many states each.

For Step~\ref{step2}, we obviously need to assume that \(H\) is finite (this assumption will appear in the statement of \cref{thm:word decidable} below).
Let \(l = \max\{\len(w) \mid (w = wc) \in H\}\), \(m = |\At|\), \(k = |\Sigma|\), and write \(N\) for the total number of states in \(\Queue_H(X, \delta_X)\) and \(\Queue_H(Y, \delta_Y)\).
We compute \(N\) explicitly and provide it an asymptotic bound as follows: 
\begin{equation}
    \label{eq:counting inspection queue states}
    \begin{aligned}
        N
        &= |(\At \cdot \Sigma)^{\le l} \times X| + |(\At \cdot \Sigma)^{\le l} \times Y| \\
        &= |(\At \cdot \Sigma)^{\le l}| (|X| + |Y|) \\
        &= \Big(\frac{(mk)^{l+1} - 1}{mk - 1}\Big) (|X| + |Y|) \\
        &\in \bigO((mk)^ln)
    \end{aligned}
\end{equation}
In the last step, we let \(n = |e| + |f|\) and from the discussion above we know \(|X| + |Y| \in \bigO(n)\).

For Step~\ref{step3}, it is observed that inspection queue automata are equivalent to ordinary GKAT automata, and therefore we can use the original decision algorithm for GKAT presented in~\cite{SmolkaFHKKS20}. In~\cite{SmolkaFHKKS20} an algorithm of Hopcroft and Karp~\cite{hopcroft1971linear} is adapted to decide language equivalence of states of GKAT-automata in nearly-linear time (when the number of primitive tests is held constant) with respect to the total number of states in the GKAT-automata involved.
Here, \emph{nearly-linear} means \(\bigO(n \alpha(n))\) time, with \(n\) the size parameter, in which \(\alpha\) denotes the inverse of the Ackermann function.
For us, the number of states involved is \(N\), calculated above in \eqref{eq:counting inspection queue states}, so Step~\ref{step3} is completed in \(\bigO(N \alpha(N))\) many steps. 
Since $N \in \bigO((mk)^l n)$, this is bounded by $\bigO((mk)^l n\cdot \alpha((mk)^l n))$.%
\footnote{%
    There is a subtle point here that is easy to miss.
    It is not trivial to conclude from \(f(n) \in \bigO(g(n))\) that \(\alpha(f(n)) \in \bigO(\alpha(g(n)))\), although this implication is true. 
    \if\arxiv1
        See \cref{app:ackermann} for details.
    \else
    \fi
}

Putting all of this together, we obtain the following decidability result.

\begin{theorem}[Decidability of word hypotheses]
    \label{thm:word decidable}
    Let $e,f\in \GExp$, let \(H\) be a finite set of word hypotheses, and let \(l = \max\{\len(w) \mid (w = wc) \in H\}\) (note that \(l\) is necessarily finite).
    Let \(n = |e| + |f|\), \(m = |\At|\), and \(k = |\Sigma|\). 
    Then it is decidable in \(\bigO((mk)^l n\cdot \alpha((mk)^l n))\) time whether $\L_H(e) = \L_H(f)$.
\end{theorem}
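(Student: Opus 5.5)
The proof assembles the three-step procedure described just before the statement; correctness and cost are argued separately. For correctness, \cref{corr:states equiv up to H} shows that the procedure returns the right answer. If the automata from Step~\ref{step1} satisfy \(\L(e) = \L((X,\delta_X),x)\) and \(\L(f) = \L((Y,\delta_Y),y)\), then
\[
\L_H(e) = \L(\Queue_H(X,\delta_X),(\varepsilon,x))
\quad\text{and}\quad
\L_H(f) = \L(\Queue_H(Y,\delta_Y),(\varepsilon,y)).
\]
So deciding \(\L_H(e)=\L_H(f)\) is exactly deciding language equivalence of two states in ordinary GKAT automata. Here I use that \(\Queue_H(-)\) is by definition a GKAT automaton with state set \((\At\cdot\Sigma)^{\le l}\times X\). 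Finiteness of \(H\) makes \(l\) finite, so these state sets are finite.

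For the cost, I bound the three steps in turn. Step~\ref{step1} takes \(\bigO(n)\) time and gives \(|X|+|Y|\in\bigO(n)\), by the cited constructions~\cite{SmolkaFHKKS20,SchmidKK021}. For Step~\ref{step2}, the count \eqref{eq:counting inspection queue states} gives \(N\in\bigO((mk)^l n)\) states. To build \(\delta_H\) I must evaluate it on each of the \(mN\) pairs \((\alpha,(u,x))\). This needs two things:
\begin{enumerate}
\item whether \(\alpha\) is locked at \(u\), which depends only on \((\alpha,u)\), and \(u\) ranges over \((\At\cdot\Sigma)^{\le l}\);
\item the value of \(\enq_l(\alpha p,u)\), which costs \(\bigO(l)\) or \(\bigO(1)\) with a suitable representation of queues.
\end{enumerate}

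I expect the lock check to be the delicate point. My plan is to precompute the locking table once, independently of \(e\) and \(f\). For each queue state \(u\) and atom \(\alpha\), check every suffix of \(u\) (at most \(l+1\) of them) against every hypothesis in \(H\); each comparison takes \(\bigO(l)\) Boolean tests. This is a cost depending only on \(H\), \(\At\) and \(\Sigma\), which the theorem treats as fixed parameters, much as \cref{thm:gkat decidability} treats \(|\At|\) as constant. Filling in \(\delta_H\) is then a table lookup per pair, so Step~\ref{step2} costs \(\bigO(mN)=\bigO((mk)^l n)\) with \(m\) constant. If charging this precomputation to constants feels unjustified, the fallback is to fold the hypothesis-dependent factor \(|H|\,l^2\) into the constant, since \(H\) is fixed.

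For Step~\ref{step3}, I apply the Hopcroft--Karp-based algorithm of~\cite{SmolkaFHKKS20} to the disjoint union of the two queue automata. It runs in \(\bigO(N\alpha(N))\) time. To turn this into \(\bigO((mk)^l n\cdot\alpha((mk)^l n))\), I use the monotonicity of \(\alpha\) together with the footnoted fact that \(f\in\bigO(g)\) implies \(\alpha\circ f\in\bigO(\alpha\circ g)\). Since \(N \le C\,(mk)^l n\) for some constant \(C\), it remains to show \(\alpha(Cx)\le\alpha(x)+\bigO(1)\), which holds because \(\alpha\) grows so slowly. Summing the three steps, Step~\ref{step3} dominates and gives the stated bound.
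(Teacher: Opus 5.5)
Your proposal is correct and follows the paper's argument. It uses correctness via \cref{corr:states equiv up to H}, an $\bigO(n)$-state GKAT automaton in Step 1, the count $N \in \bigO((mk)^l n)$, the Hopcroft--Karp check in $\bigO(N\alpha(N))$, and the bound $\alpha(Cx) \le \alpha(x) + \bigO(1)$, which is exactly the appendix lemma behind the paper's footnote; your ``$\alpha$ grows slowly'' can be made precise by $A_{k+1}(1) = A_k(A_k(1)) \ge 2A_k(1)$ for $k \ge 1$. Your costing of the lock table for $\delta_H$ goes beyond the paper, which only counts states; treating that precomputation as constant conflicts with the paper's remark that $l$ may vary, but the stated bound, which omits $|H|$, tacitly makes the same assumption.
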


Recall that the complexity bound in \cref{thm:gkat decidability} holds the number of primitive tests constant.
We are also assuming that the primitive tests are held constant in \cref{thm:word decidable} (and in \cref{corr:hoare hypotheses in nl time} below), as well as the number of primitive programs, but we allow \(l\) to vary.
The presence of \(m\) and \(k\) in the complexity bound in \cref{thm:word decidable} are necessary as they form the base of the exponential factor in the runtime.

Returning to the content of earlier sections, observe that a Hoare hypothesis is a word hypothesis of length \(1\).
In the special case where \(H\) is a set of Hoare hypotheses, therefore, \(l = 1\) in \cref{thm:word decidable}.
This immediately leads us to the following corollary of \cref{thm:word decidable}, which improves on the complexity bound given in \cref{lem:translation size}.

\begin{corollary}
    \label{corr:hoare hypotheses in nl time}
    Let \(H\) be a set of Hoare hypotheses and let \(e, f \in \GExp\).
    Then it is decidable in \(\bigO(n \alpha(n))\) time where \(n = |e| + |f|\), whether \(\L_H(e) = \L_H(f)\).
\end{corollary}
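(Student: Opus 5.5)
The plan is to specialise \cref{thm:word decidable} to the case $l = 1$. First I check that every Hoare hypothesis $bp = bpc$ is a word hypothesis $w = wc$ with $w = bp \in (\BExp \cdot \Sigma)^1$. Next I check that the two notions of $\L_H$ agree. A guarded string violates $bp = bpc$ exactly when it has the form $\alpha p \beta$ with $\alpha \le_\BA b$ and $\beta \le_\BA \bar c$. So ``no guarded substring violates $H$'' amounts to: for every consecutive triple $\alpha_i p_{i+1} \alpha_{i+1}$ with $\alpha_i \le_\BA b$ and $bp_{i+1} = bp_{i+1}c \in H$, we have $\alpha_{i+1} \le_\BA c$. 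This is precisely the condition in \cref{def:language up to H}. Hence the language semantics of \cref{def:language up to H} coincides with that of \cref{def:language up to word hyp}.

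A small point is that the corollary does not assume $H$ finite, whereas \cref{thm:word decidable} does. To handle this, I use that $\At$ and $\Sigma$ are held fixed, so modulo $=_\BA$ there are at most $2^m \cdot k \cdot 2^m$ distinct Hoare hypotheses. Hypotheses that agree up to $=_\BA$ are violated by exactly the same guarded strings, because violation only mentions $\le_\BA$. We may therefore replace $H$ by a finite set $H'$ of representatives with $\L_H = \L_{H'}$. This $H'$ is computable from $H$ as long as $H$ is given effectively, e.g.\ finite up to $\BA$. If $H$ is empty, the claim is just \cref{thm:gkat decidability}. Otherwise $l = 1$.

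Applying \cref{thm:word decidable} with $l = 1$ then gives a running time in $\bigO(mk\, n \cdot \alpha(mk\, n))$. Since $m = |\At|$ and $k = |\Sigma|$ are held constant, $mk$ is a constant factor and $mk\,n \in \bigO(n)$. It remains to show $\alpha(mk\,n) \in \bigO(\alpha(n))$. This is the only step I expect to need any care, and it is the subtlety flagged in the footnote preceding \cref{thm:word decidable}. I would argue it from the monotonicity and extremely slow growth of the inverse Ackermann function: for a fixed constant $c$, $\alpha(cn) \le \alpha(n) + O(1)$ for all large $n$, which is exactly the fact invoked in that footnote. Combining the two bounds yields $\bigO(n\alpha(n))$, as claimed.
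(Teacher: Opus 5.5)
Your proposal is correct and follows essentially the same route as the paper: treat each Hoare hypothesis as a word hypothesis of length $1$, instantiate \cref{thm:word decidable} with $l = 1$, and absorb the constant factor $mk$ inside $\alpha$ using the fact $\alpha(cn) \le \alpha(n) + t$, which the paper proves in its appendix (\cref{lem:ack plus 1}, \cref{thm:ackermann preservation}). Your additional checks are sound and make explicit points the paper leaves implicit: that the two definitions of $\L_H$ coincide on Hoare hypotheses, and that a possibly infinite $H$ can be replaced by finitely many representatives modulo $=_\BA$, so that the finiteness assumption of \cref{thm:word decidable} is met.
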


The presented decision procedure shows that word hypotheses can be added to GKAT whilst maintaining the nearly linear time complexity of deciding equivalence.

\begin{remark}\label{remark:fresh}
	The decision procedure relies on an automaton construction (\cref{def:bdd queue automata}) which remembers in the queue \emph{all} of the last $l$ actions and atoms, where $l$ is the maximum length of hypotheses in $H$. 
	If $H$ contains only a few (possibly long) hypotheses, then this may not be optimal, and instead one could choose only to store pre-guarded strings that are relevant for some hypothesis in $H$.
	This may reduce the size of the automaton, but it also complicates the definition (and computation) of the transition function.
	In particular, one then needs to remember all pre-guarded strings that are a prefix of some hypothesis in $H$, and in fact it can be that only a \emph{suffix} of the current queue state is a prefix of a hypothesis in $H$.
	We instead chose to present the conceptually simpler construction in (\cref{def:bdd queue automata}).
	
	Another alternative construction may be to reduce word hypotheses to Hoare hypotheses, through the introduction of fresh variables.
	These variables can then uniquely identify a prefix of a hypothesis, and in this way implicitly store these in a queue which then only needs to be of length 1 (at the cost of introducing fresh variables).
	Similarly to the above-mentioned construction, this may work better when there are few hypotheses in $H$, but also complicates the analysis.
	We believe both alternatives are interesting for consideration in future work, especially in the context of an implementation.
\end{remark}

\section{Future work}
We have provided an extension of GKAT with Hoare hypotheses and word hypotheses. 
The introduced semantics for Hoare hypotheses is proved to be complete and decidable in~\cref{sec:hoare hypotheses}. 
In~\cref{sec:words and queues}, we generalised Hoare hypotheses to word hypotheses, and provided a decision procedure that also covers the plain Hoare hypotheses, and is more efficient in that case. 
The decision procedure makes use of an automata-theoretic construction.

An open question is completeness of GKAT with word hypotheses. 
A possible proof strategy would be to reduce word hypotheses to Hoare hypotheses via fresh variables, as discussed in~\Cref{remark:fresh}, and then reuse the completeness result of Hoare hypotheses. 
This would require, among other things, a proof that a derivation using Hoare hypotheses with fresh variables can be translated to a derivation with the equivalent word hypotheses. 
This question is beyond the scope of the current paper but seems fruitful for future research.

A natural question is whether more general hypotheses can be added to GKAT, like in KA. 
This seems very challenging as there is no canonical interpretation of hypotheses like what exists for language models of KA. 
An important difference is that for language models of KA the addition of hypotheses expands the original language semantics, whereas our GKAT semantics with hypotheses shrinks the plain GKAT semantics. 
A semantics for GKAT with hypotheses that expands the original semantics seems unlikely, as it is unclear how to remain deterministic. 
On the other hand, to develop KA with hypotheses based on a semantics that shrinks languages seems a relevant future direction.

A more straightforward question is how Hoare and word hypotheses can be added to bisimulation GKAT~\cite{SchmidKK021}, which is a process-theoretic variation of GKAT that lacks the \emph{late-termination axiom} \(e0 = 0\). 
A related question is whether we can obtain language GKAT as an instance of bisimulation GKAT with the added hypothesis $e0=0$.

\bibliography{refs}

\newpage

\appendix

\section{Additional proofs for \cref{sec:hoare hypotheses}}
\label{app}

In the proofs below we identify $bc$ and $b \wedge c$, implicitly using axiom (S6) from \cref{fig:axioms}.
We begin by observing the following consequence of the \(\GKAT\) axioms (see \cref{fig:axioms}):
\begin{equation}
    \label{eq:restriction}
    \GKAT \vdash b(e +_b f) = be
\end{equation}
We call this \emph{restriction}, and it can be derived as follows.
Let us start with an identity we will call (U4'), which can be shown using (U2), (U4), and (U2), in that order:
\begin{equation}
    e +_b f 
    = f +_{\bar b} e 
    = \bar b f +_{\bar b} e 
    = e +_b \bar b f
    \tag{U4'}
\end{equation}
We also observe that each test \(b \in \BExp\) can really be thought of as syntactic sugar for \(1 +_b 0\), what we will call (B). 
The equation (B) is due to (U1), (\(\BA\)), (U4), (U4'), and (\(\BA\)) in that order:
\begin{equation}
    b 
    = b +_b b 
    = b1 +_b b 
    = 1 +_b b
    = 1 +_b \bar b b
    = 1 +_b 0
    \tag{B}
\end{equation}
Finally, the identity (restriction) is derived as follows:
\begin{align*}
    b(e +_b f)
    &= (1 +_b 0)(e +_b f)           \tag{B} \\
    &= (e +_b f) +_b 0              \tag{U5, S4, S2} \\
    &= e +_b (f +_b 0)              \tag{U3, \(\BA\)} \\
    &= e +_b (1 +_b 0)f             \tag{S2, S4, U5} \\
    &= e +_b bf                     \tag{B} \\
    &= e +_b \bar bbf               \tag{U4'} \\
    &= e +_b 0                      \tag{\(\BA\), S2} \\
    &= (1 +_b 0)e                   \tag{S2, S4, U5} \\
    &= be                           \tag{B} 
\end{align*}

\lemmaInvarianceEquivToHoisting*

\begin{proof}
    For the right-to-left direction, we have the following derivation of \(p(e +_b f) = pe +_b pf\):
    \begin{align*}
        \GKAT + \mathsf{Inv}_{p,b} \vdash 
        ~~p(e +_b f)  
        &= p (e +_b f) +_b p(e +_b f) \tag{U1}\\
        &= bp (e +_b f) +_b \bar bp(e +_b f) \tag{U4, U4'} \\
        &= bpb (e +_b f) +_b \bar bp\bar b(e +_b f) \tag{\(\mathsf{Inv}_{p,b}\)}\\
        &= bpbe +_b \bar bp\bar bf \tag{restriction, U2} \\
        &= bpe +_b \bar bpf \tag{\(\mathsf{Inv}_{p,b}\)}\\
        &= pe +_b pf \tag{U4, U4'}
    \end{align*}
    In other words, \(\mathsf{Inv}_{p,b}\) implies that \(p\) can be hoisted from any conditional \(e +_b f\).

    Conversely, the hoisting hypothesis \(\mathsf{Hst}_{p,b}\) implies the invariant test hypothesis \(\mathsf{Inv}_{p,b}\). 
    To see this, choose \(e = 1\) and \(f = 0\) and observe that
    \begin{align*}
        \GKAT + \mathsf{Hst}_{p,b}
        \vdash~~
        bpb 
        &= bp(1 +_b 0) \tag{B}\\
        &= b(p1 +_b p0) \tag{\(\mathsf{Hst}_{p,b}\)}\\
        &= bp1 \tag{restriction} \\
        &= bp 
    \end{align*}
    The same argument with \(e = 0\) and \(f = 1\) shows that the hoisting principle for \(p\) over \(b\) implies \(\bar b p = \bar b p \bar b\).
\end{proof}

\lemmaCommuteEquivToInvariance*

\begin{proof}
In the forward direction,
\begin{align*}
    \GKAT + \mathsf{Inv}_{p,b} \vdash 
    pb
    &= pb +_b pb \tag{U1} \\
    &= bpb +_b \bar b pb \tag{U4, U4'}\\
    &= bp +_b \bar b p \bar b b \tag{\(\mathsf{Inv}_{p,b}\)} \\
    &= bp +_b \bar b p 0 \tag{BA}\\
    &= bp +_b 0 \tag{S3} \\
    &= bp \tag{U4, S2, S4, U5, B}
\end{align*}
The reverse direction is much simpler:%
\footnote{%
    Interestingly, it appears that the use of (S3) is necessary in the forward direction, but the reverse direction does not require (S3).
}
\begin{align*}
    \GKAT + \mathsf{Com}_{p,b} \vdash 
    bpb
    &= bbp \tag{\(bp = pb\)} \\
    &= bp \tag{\(\BA\)}  
\end{align*}

Similar derivations can be given for $\bar b p =\bar b p \bar b$ and $\bar b p = p \bar b$.
\end{proof}

\lemmaLHHom*

\begin{proof}
	Note that $\L_H(b) = \L(b)$ follows easily from the definition of $\L_H$, giving the first case.
	For the remaining cases, we first define
    $$ X = \{ \alpha_1p_1\cdots\alpha_np_n\alpha_{n+1} \mid (b p_i = b p_i c\in H\wedge \alpha_i\leq b) \Rightarrow \alpha_{i+1}\leq c\} $$ 
	and observe that for any \(e \in \GExp\), \(\L_H(e) = \L(e) \cap X\). 
	
    We will need the following identity: for any guarded languages \(L_1,L_2\), we have
    \begin{align*}
        (L_1 \diamond L_2) \cap X &= (L_1 \cap X) \diamond (L_2 \cap X)
        \tag{\(\dagger\)}
    \end{align*}
    The identity (\(\dagger\)) above is derived from the fact that we have a guarded string $u\alpha v\in X$ if and only if we have guarded strings $u\alpha\in X$ and $\alpha v\in X$.
    In particular, one should note that \(\At \subseteq X\), so from (\(\dagger\)) we also obtain \((b \diamond L) \cap X = b \diamond (L \cap X)\) and \((L \diamond b) \cap X = (L \cap X) \diamond b\) for any guarded language \(L\) and \(b \in \BExp\).

    In the guarded union case, we have
    \begin{align*}
        \L_H(e +_b f)
        &= (b\diamond \L(e) \cup \bar b\diamond \L(f))\cap X\\
        &= ((b\diamond \L(e))\cap X) \cup ((\bar b\diamond \L(f))\cap X) \\
        &= (b\diamond (\L(e)\cap X) )\cup (\bar b\diamond (\L(f)\cap X)) \tag{\(\dagger\)} \\
		&= (b\diamond \L_H(e)\cup (\bar b\diamond \L_H(f))
    \end{align*}
    For concatenation, we have
    \begin{align*}
        \L_H(e \cdot f )&= (\L(e) \diamond \L(f))\cap X \\
        &= (\L(e) \cap X)\diamond (\L(f)\cap X) \tag{$\dagger$}\\
        &= \L_H(e) \diamond \L_H(f) \tag{def.~of \(\L_H\)}
    \end{align*}
    For the star case, we first derive the following for all $n\geq 0$:
    \begin{equation}\label{eq:starx}
        (b\diamond L)^{\langle n \rangle }\cap X = (b\diamond (L\cap X))^{\langle n \rangle }
    \end{equation}
    We proceed by induction on $n$. 
    The base case is trivial as $\At\subseteq X$. 
    For $n+1$ we derive 
    \begin{align*}
        (b\diamond L)^{\langle n+1 \rangle} \cap X &= ((b\diamond L)\diamond (b\diamond L)^{\langle n \rangle})\cap X\\
        &= ((b\diamond L) \cap X)\diamond ((b\diamond L)^{\langle n \rangle}\cap X) \tag{$\dagger$}\\
        &= (b\diamond (L\cap X))\diamond ((b\diamond L)^{\langle n \rangle}\cap X) \tag{$\dagger$} \\
        &=(b\diamond (L\cap X))\diamond (b\diamond (L\cap X))^{\langle n \rangle}\tag{I.H.}\\
        &= (b\diamond (L\cap X))^{\langle n+1 \rangle}
    \end{align*}
    Returning to the \((-)^{(b)}\) case, recall that for a guarded language \(L\), \(L^\star = \bigcup_{n\ge 0} L^{\langle n \rangle}\).
    We have
    \begin{align*}
        \L_H(e^{(b)})&= ((b \diamond \L(e))^\star \diamond \bar b)\cap X\\
        &= ((b \diamond \L(e))^\star \cap X)\diamond \bar b \tag{\(\dagger\)} \\
        &= \left(\left( \bigcup_{n \geq 0}  \left(b \diamond \L(e)\right)^{\langle n \rangle }\right)\cap X\right) \diamond \bar b \\
        &= \left(\bigcup_{n \geq 0}  \left(\left(b \diamond \L(e)\right)^{\langle n \rangle } \cap X\right)\right) \diamond \bar b \\
        &= \left(\bigcup_{n \geq 0}  \left(b \diamond (\L(e)\cap X)\right)^{\langle n \rangle } \right)\diamond \bar b \tag*{\eqref{eq:starx}}\\
        &= (b \diamond (\L(e)\cap X))^\star \diamond \bar b \\
		&= (b \diamond \L_H(e))^\star \diamond \bar b \qedhere
        \end{align*}
\end{proof}

\theoremSoundness*

\begin{proof}
    The derivation relation is the smallest congruence relation that contains the axioms of $\GKAT$ and the equalities in $H$. 
    By showing that the language equivalence relation is a congruence that contains the axioms of $\GKAT$ and the equalities in $H$, it follows that \(\GKAT +H \vdash e = f\) implies $\L_H(e) = \L_H(f)$. 
    If  \( e = f\) is an axiom of $\GKAT$, we obtain from $\GKAT$ soundness that $\L(e)=\L(f)$. 
    For $e=f\in H$, we know that $e=bp$ and $f=bpc$, and the result follows immediately. 
	
	It remains to show that language equivalence up to $H$ is a congruence.
	Reflexivity, symmetry and transitivity are trivial. 
	It remains to show closure under each of the operators. 
	To this end, let $\L_H(e_0) = \L_H(f_0)$ and $\L_H(e_1) = \L_H(f_1)$. 
	Then 
	$$
	\L_H(e_0 +_b f_0) = b\diamond \L_H(e_0) \cup \bar b\diamond \L_H(f_0)
	= b\diamond \L_H(e_1) \cup \bar b\diamond \L_H(f_1)
	= \L_H(e_1 +_b f_1)
	$$
	using \cref{lem:lh-hom}. 
	The cases $\L_H(e_0 \diamond f_0) =  \L_H(e_1 \diamond f_1)$ and $\L_H(e_0^{(b)}) = \L_H(f_0^{(b)})$ follow similarly from \cref{lem:lh-hom}.
\end{proof}

\translationsize*

\begin{proof}
    We proceed by induction on $e$. If $e=b\in\BExp$, the result trivially holds because $\transl{H}(b)=b$. 

    For $e=p\in\Sigma$, we proceed to prove that $|\transl{H}(p)|=3(2^{n_p^H})-2$ by induction on the number of hypotheses in $H$ that involve $p$. If there exists no hypothesis in $H$ that involves $p$, the result trivially holds. 
    Otherwise pick any $h\in H$ of the form $bp=bpc$ for some $b,c\in\BExp$.
    As the order of reductions in $\transl{H}$ does not matter (\cref{lem:hypcommute}, and it is also easy to see it does not affect the size), we assume the reduction $\transl{h}$ is performed `last', i.e. $\transl{H}(e)=\transl{ h}(\transl{ H\setminus h}(e))$. To alleviate notation, we denote $n_p^H$ with $n$ and $n_p^{H\setminus h}$ with $m$. Observe that $n=m+1$.
    Our induction hypothesis states that $|\transl{H\setminus h}(p)|=3(2^m)-2$. We derive 
    \begin{align*}
    |\transl{H}(p)| &= |\transl{h}(\transl{H\setminus h}(p))|\\
    &=|\transl{H\setminus h}(\transl{h}(p))| \\
    &= |\transl{H\setminus h}(pc+_b p)| \\
    &= | \transl{H\setminus h}(p)c+_b \transl{H\setminus h}(p)|\\
    &=2|T_{H\setminus h}(p)|+2 \\
    &= 2(3(2^m)-2)+2 \tag{Induction Hypothesis}\\
    &=3(2^{m+1})-2 = 3(2^n)-2 \tag{$2(2^m)=2^{m+1}$, $n=m+1$}
    \end{align*}

    For $e+_b f$ we abbreviate $n_p^H$ with $n_p$ and derive 
    \begin{align*}
    &|\transl{H}(e+_b f)| \\
    &= |\transl{H}(e)+_b \transl{H}(f)|\\
    &= |\transl{H}(e)|+ 1 +| \transl{H}(f)|\\
    &=  \sum_{p\in e}(3(2^{n_p})-3)+|e| +1 + \sum_{p\in f}(3(2^{n_p})-3)+|f|\tag{Induction Hypothesis}\\
    &=\sum_{p\in e+_b f}(3(2^{n_p^H})-3)+|e+_b f|
    \end{align*}
    The other inductive cases follow similarly. 
\end{proof}

\section{Additional proofs for \cref{sec:words and queues}}
\label{app2}

\lemmaPathCharacterization*

\begin{proof}
    We proceed by induction on $n$. For $n=0$, $\insp(l,w_0)=w_0$ follows because $w_0\in(\At\cdot \Sigma)^{\leq l}$.

    For $n+1$, we derive 
    \begin{align*}
        w_{n+1} 
        &= \enq_l(\alpha_{n+1} p_{n+1}, w_{n}) \tag{def.~of \(\delta_H\)} \\
        &= \insp(l, w_{n}\alpha_{n+1} p_{n+1}) \tag{def.~of \(\enq_l\)} \\
        &= \insp(l, w_0\alpha_1 p_1 \cdots \alpha_{n+1} p_{n+1} ) \tag{ind.~hyp.}
    \end{align*}
    as desired. 
    This concludes the proof.
\end{proof}

\lemmaPathtoPath*

\begin{proof}
    We proceed by induction on $n$. 
    If $n=0$, then $\alpha_0$ is not locked at $u$ by assumption. 
    Hence, $\delta_H(\alpha_0,(u,x_s))=\delta(\alpha_0,x)=\top$, and we can conclude. 
    
    For the induction step, let \(n > 0\) and consider the path 
    \[
      x_1 \tr{\alpha_1\mid p_2} \cdots \tr{\alpha_{n-1} \mid p_n} x_n \Rightarrow \alpha_n
    \]
    starting at \(\alpha_1\).
    We relabel by setting $\beta_i=\alpha_{i+1}$ for $i\in \{0,\dots n-1\}$ and $q_i=p_{i+1}$ for $i\in \{1,\dots n-1\}$.
    This gives the path
    \[
        x_1 \tr{\beta_0\mid q_1} \cdots \tr{\beta_{n-2} \mid q_{n-1}} x_n \Rightarrow \beta_{n-1}
    \] 
    Now, for all $n\in \NN$, the guarded string
    $$
      \insp(n,u \alpha_0 p_1 \alpha_1 p_2\cdots \alpha_{j-1}p_{j})\cdot \alpha_{j}
    $$ 
    does not violate $H$ for all $0\leq j\leq n$ by assumption.
    Consequently, for all $n\in \NN$, the string 
    $$
      \insp(n,u \alpha_0 p_1 \beta_0 q_1\cdots \beta_{k-2}q_{k-1})\cdot \beta_{k-1}
    $$ 
    does not violate $H$ for any $1\leq k\leq n$. 
    Hence,  $\beta_{k-1}$ is not locked at $\insp(l,u \alpha_0 p_1 \beta_0 q_1\cdots \beta_{k-1}q_{k-1})$ for any $1\leq k \leq n-1$. 
    Then we can apply the induction hypothesis to obtain a path,
    \[
        (u \alpha_0 p_1, x_1) 
        \tr{\beta_0 \mid q_1} 
        \cdots 
      \tr{\beta_{n-2} \mid q_{n-1}} (\insp(l,u\cdot\alpha_0p_1\beta_0 q_1\cdots \beta_{n-2} q_{n-1}), x_n) 
        \Rightarrow \beta_{n-1}
    \]
    If we reverse our labelling, we obtain the path
    \[
        (u \alpha_0 p_1, x_1) 
        \tr{\alpha_1 \mid p_2} 
        \cdots 
        \tr{\alpha_{n-1} \mid p_{n}} (\insp(l,u\cdot\alpha_0p_1\alpha_1 p_2\cdots \alpha_{n-1} p_n), x_n) 
        \Rightarrow \alpha_n
    \]
    As $\delta(\alpha_0, x_s)=(p_1,x_1)$ and $\alpha_0$ is not locked at $u$ by assumption, we get that $\delta_H(\alpha_0,u,x_s)=(p_1,\insp(l,u\alpha_0 p_1), x_1)$, and we can conclude.
\end{proof}

\if\arxiv1
\subsection*{Proof of the complexity bound in \cref{thm:word decidable}}

\label{app:ackermann}

It is a subtle point, but the complexity bound in \cref{thm:word decidable} requires a nontrivial property of the inverse Ackermann function.
The point where this property is used was in the step where we deduced from \(N \in \bigO((mk)^ln)\) that \(N \alpha(N) \in \bigO((mk)^l n\alpha((mk)^ln))\).
This would of course follow if \(f(n) \in \bigO(g(n))\) implies \(\alpha(f(n)) \in \bigO(\alpha(g(n)))\) for any \(f,g\).
But this is not true for just any function in place of \(\alpha\)! 
This \emph{complexity preservation property} of \(\alpha\) is needed to ensure the validity of this step.
This complexity preservation property of \(\alpha\), stated below (\cref{thm:ackermann preservation}), is likely to be known, but we provide a proof anyway for the convenience of the reader.

To state the property, we need to fix a defintion of the Ackermann function and its inverse. 
We choose a standard definition, found in~\cite{CLRS}.
For each \(k \in \NN\), the \emph{\(k\)-th Ackermann function} \(A_k \colon \NN \to \NN\) is defined inductively by 
\[
    A_k(m) = \begin{cases}
        m + 1 &\text{if }k = 0 \\
        A_{k-1}^{m+1}(m) &\text{if }k >0 
    \end{cases}
\]
where we are using the notation \(f^m(x)\) to denote the \(m\)-fold application of some function \(f\) to its argument \(x\), where \(m\) is any nonnegative integer.
The \emph{inverse Ackermann function} \(\alpha \colon \NN \to \NN\) is defined to be 
\[
    \alpha(n) = \min\{k \in \NN \mid A_k(1) \ge n\}
\]
We can now state the property of \(\alpha\) needed in \cref{corr:hoare hypotheses in nl time}.
Write \(\vec n = (n_0, n_1, \dots, n_{l-1})\) for a vector of natural numbers.

\begin{theorem}
    \label{thm:ackermann preservation}
    Let \(f, g \colon \NN^l \to \NN\) be positive integer valued functions in some fixed number of arguments, and assume they are monotone in each argument. 
    Let \(\alpha \colon \NN \to \NN\) be the inverse Ackermann function. 
    Then, if \(f(\vec n) \in \mathcal O(g(\vec n))\), we also have \(\alpha(f(\vec n)) \in \mathcal O(\alpha(g(\vec n)))\).
\end{theorem}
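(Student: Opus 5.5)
The plan is to show that \(\alpha\) grows so slowly that multiplying its argument by a constant raises its value by at most an additive constant. Since \(\alpha\) is monotone and takes values at least \(1\) on large inputs, that gives the claim.

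Fix \(C \ge 1\) and \(N_0\) such that \(f(\vec n) \le C\,g(\vec n)\) whenever all \(n_i \ge N_0\). The definition \(\alpha(n) = \min\{k \mid A_k(1) \ge n\}\) shows that \(\alpha\) is monotone non-decreasing, because the set of admissible \(k\) only shrinks as \(n\) grows. Hence \(\alpha(f(\vec n)) \le \alpha(C\,g(\vec n))\), and it suffices to prove a \emph{key lemma}: there is a constant \(c\), depending only on \(C\), with \(\alpha(Cm) \le \alpha(m) + c\) for all \(m \ge 1\).

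For the key lemma I will use a few elementary facts about the Ackermann hierarchy, each proved by induction on \(k\):
\begin{enumerate}
\item \(A_k(m) > m\) for all \(k, m\).
\item \(A_k\) is increasing in \(m\).
\item \(A_{k+1}(m) \ge A_k(m)\).
\item \(A_1(m) = 2m+1\), and more generally \(A_{k+1}(m) \ge A_k(A_k(m))\) once \(m \ge 1\).
\end{enumerate}
From these, \(A_{k+1}(1) = A_k(A_k(1)) \ge 2A_k(1)\) for \(k \ge 1\), and by iterating, \(A_{k+j}(1) \ge 2^j A_k(1)\). Now let \(k = \alpha(m)\), so \(A_k(1) \ge m\), and choose \(j = \lceil \log_2 C \rceil\). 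Then \(A_{k+j}(1) \ge 2^j m \ge Cm\), which gives \(\alpha(Cm) \le k + j\).

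The main obstacle is getting these inequalities right for small indices. Because of the convention \(A_k(m) = A_{k-1}^{m+1}(m)\), the case \(k=0\) needs separate handling: \(A_0(1)=2\) and \(A_1(1)=3\), so the doubling claim should be stated only for \(k \ge 1\), with a single extra step added to \(j\) to cover \(k=0\). The monotonicity facts need a small nested induction, which is routine but fiddly.

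Finally, \(\alpha(g(\vec n)) \ge 1\) whenever \(g(\vec n) \ge 3\). So \(\alpha(f) \le \alpha(g) + c \le (1+c)\,\alpha(g)\) for all sufficiently large \(\vec n\). Only finitely many cases need special treatment, namely those with \(g(\vec n) \le 2\). Since \(f\) is bounded by \(2C\) on those points, \(\alpha(f)\) is bounded by a constant there, and that constant is absorbed into the \(\mathcal O\). Monotonicity of \(f\) and \(g\) is not really needed here beyond making the eventual-domination formulation of \(\mathcal O\) in several variables unambiguous.
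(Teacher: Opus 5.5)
Your skeleton is the paper's. Monotonicity of $\alpha$ reduces the claim to an additive bound $\alpha(Cm)\le\alpha(m)+t$ with $t$ independent of $m$, and the $\mathcal O$-statement is then read off. The difference is in how you prove this key lemma.

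The paper uses $A_{k+2}(m)\ge A_k(2m)$ and $\log m\in\mathcal O(A_k(m))$. That gives only linear growth, $A_{k+2l}(1)\ge (l/d)\,A_k(1)$ for $l\ge d$, and it then sets $t=2cd$. But $d$ is obtained for a fixed $k$, and $k$ is chosen depending on $n$ (any $k$ with $A_k(1)\ge n$). So the paper's assertion that $t$ does not depend on $n$ needs a uniformity argument it does not supply.

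You instead use $A_{k+1}(1)=A_k(A_k(1))$ together with $A_k(x)\ge A_1(x)=2x+1$ for $k\ge 1$. This gives $A_{k+j}(1)\ge 2^jA_k(1)$ uniformly in $k\ge1$, hence the explicit constant $t=\lceil\log_2 C\rceil+1$ once $k=0$ (i.e.\ $m\le 2$) is patched as you describe. Your route is more elementary, visibly uniform, and only logarithmic rather than linear in $C$. Your four auxiliary facts and the $k=0$ patch check out.

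The gap is in your final paragraph. The points with $g(\vec n)\le 2$ need not be finitely many (take $g\equiv 1$). At such points $\alpha(g(\vec n))=0$, so no positive constant can be absorbed into $\mathcal O(\alpha(g))$.

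In fact the statement is false there. Take $f\equiv 3$ and $g\equiv 1$: they are positive, monotone, and satisfy $f\le 3g$. Yet $\alpha(f)=1$ (as $A_0(1)=2<3=A_1(1)$), while $\alpha(g)=0$. The paper's proof has the same hole: it passes from $\alpha(f)\le\alpha(g)+t$ to $\mathcal O(\alpha(g))$ without comment. The honest repair is therefore an extra hypothesis.

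Contrary to your remark, monotonicity of $g$ is exactly what makes that repair work. Since $g(\vec n)\le g(M,\dots,M)$ for $M=\max_i n_i$, one of two cases holds:
\begin{itemize}
\item $g\le 2$ everywhere. This degenerate case must be excluded, e.g.\ by assuming $g$ unbounded, which holds for the bound $(mk)^l n$ in the paper's application.
\item Some $N$ has $g(N,\dots,N)\ge 3$. Then $g(\vec n)\ge 3$, so $\alpha(g(\vec n))\ge 1$ and $\alpha(f(\vec n))\le(1+t)\,\alpha(g(\vec n))$, whenever all $n_i\ge\max(N,N_0)$.
\end{itemize}
Without monotonicity, $g$ could return to $1$ infinitely often inside every such box, and the claim would fail again.
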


The proof proceeds in several stages. 
We need the following standard properties of the Ackermann functions: for any \(k,m \in \NN\), 
\begin{enumerate}
    \item \label{ack prop 1} \(A_{k+2}(m) \ge A_k(2m)\)
    \item \label{ack prop 2} if \(k_1 \ge k_2\), then \(A_{k_1}(m) \ge A_{k_2}(m)\)
    \item \label{ack prop 3} if \(m_1 \ge m_2\), then \(A_k(m_1) \ge A_k(m_2)\)
    \item \label{ack prop 4} for any \(k \in \NN\), \(\log(m) \in \bigO(A_k(m))\)
\end{enumerate}
From these, we can derive the following inequality. 

\begin{lemma}
    \label{lem:ackermann help}
    For any \(k \in \NN\), there is a number \(d\) such that for any \(l \ge d\), 
    \[A_{k + 2l}(1) \ge (l/d) \cdot A_k(1)\]
\end{lemma}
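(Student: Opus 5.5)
We prove the lemma by repeatedly applying property~\ref{ack prop 1}, which says each increase of the index by \(2\) at least doubles the argument. Concretely, we will show by induction on \(j\) that
\[
    A_{k+2j}(m) \ge A_k(2^j m) \qquad \text{for all } j,m \in \NN .
\]
The case \(j = 0\) is trivial. For the inductive step we argue
\[
    A_{k+2(j+1)}(m) = A_{(k+2j)+2}(m) \ge A_{k+2j}(2m) \ge A_k(2^j \cdot 2m) = A_k(2^{j+1}m).
\]
The first inequality is property~\ref{ack prop 1} applied with index \(k+2j\). The second is the induction hypothesis, applied at argument \(2m\); since the hypothesis is stated for all \(m\), no separate appeal to monotonicity (property~\ref{ack prop 3}) is needed here. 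Instantiating \(m = 1\) gives \(A_{k+2l}(1) \ge A_k(2^l)\).

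It then remains to compare \(A_k(2^l)\) with \((l/d)\cdot A_k(1)\). The first step is to show \(A_k(m) \ge m+1\) for all \(k,m\). This is immediate for \(k=0\). For \(k>0\), the \(m+1\)-fold iteration of \(A_{k-1}\) starting at \(m\) increases the value by at least one at every step, by the inductive claim for \(k-1\); alternatively one can use property~\ref{ack prop 2} together with \(A_0(m)=m+1\). This yields
\[
    A_{k+2l}(1) \ge A_k(2^l) \ge 2^l + 1 \ge 2^l .
\]
The required inequality therefore reduces to \(2^l \ge (l/d)\,A_k(1)\). Since \(A_k(1)\) is a constant \(C_k\) depending only on \(k\), we choose \(d = \max(1, C_k)\). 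Then for every \(l \ge d\) we get \((l/d)\,C_k \le l \le 2^l\), as required. In fact this choice works for all \(l \ge 0\), so the restriction \(l \ge d\) costs nothing.

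The main obstacle is only the bookkeeping in the first step: one must make sure property~\ref{ack prop 1} is applied at the shifted index \(k+2j\), not at \(k\), and that the induction runs over all arguments \(m\) simultaneously. Property~\ref{ack prop 4} is not needed for this lemma; presumably it enters later, in the proof of \cref{thm:ackermann preservation}.
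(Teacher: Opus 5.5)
Your proof is correct. The first half is exactly the paper's: you prove $A_{k+2j}(m) \ge A_k(2^j m)$ by induction, using property 1 at the shifted index, and then set $m=1$. The second half takes a different route. The paper invokes property 4 ($\log(m) \in \mathcal{O}(A_k(m))$) and extracts big-$\mathcal{O}$ constants to get $\log(j)\cdot A_k(1) \le m\cdot A_k(j)$ for $j \ge m$. It then substitutes $j = 2^l$ and $d = 2^m$ to conclude $l \cdot A_k(1) \le d \cdot A_k(2^l)$ for $l \ge d$. You instead use the elementary bound $A_k(m) \ge m+1$, which follows in one line from property 2 and $A_0(m) = m+1$. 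This gives $A_k(2^l) \ge 2^l \ge l$, and you choose $d = \max(1, A_k(1))$ explicitly. Your route buys an explicit constant, validity for every $l \ge 0$, and no asymptotic bookkeeping. It also shows that property 4 is overkill, since property 4 itself follows from the linear bound. One correction to your closing aside: in the paper, property 4 is used only in this very step, not later in the proofs of the inverse-Ackermann results. So your argument makes property 4 unnecessary for the whole appendix. The paper's version gains nothing essential beyond phrasing the comparison asymptotically.
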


\begin{proof}
    The proof is split into two inequalities: first, we show that \(A_{k + 2l}(1) \ge A_k(2^l)\) for all \(k \in \NN\) and \(l \ge 2\).
    Then we are going to find \(d\) such that \(d \cdot A_k(2^l) \ge c \cdot A_k(1)\) for all \(l \ge d\).

    For the first statement, we show something slightly stronger, that for \(m \in \NN\), we have \(A_{k + 2l}(m) \ge A_k(m2^l)\).
    We proceed by induction on \(l\).
    In the base case, \(l = 0\), so 
    \[
        A_{k + 2l}(m) = A_{k}(m) = A_k(m2^l)
    \]
    because \(2l = 0\) and \(2^l = 1\).
    In the induction step, we have 
    \[
        A_{k+2(l+1)}(m)
        = A_{k+2l+2}(m)
        \ge A_{k+2l}(2m)
        \ge A_{k}(2m2^l)
        = A_{k}(m2^{l+1})
    \]
    The first inequality is property~\ref{ack prop 1}.
    The second inequality is the induction hypothesis.
    Taking \(m = 1\), we obtain the desired statement.

    Let us now show the second statement.
    For any fixed \(k \in \NN\) and varying \(j\), we immediately obtain \(\log(j) \cdot A_k(1) \in \bigO(A_k(j))\) from property~\ref{ack prop 4}.
    This means that there are constants \(m',m'' \in \NN\) such that \(\log(j) \cdot A_k(1) \le m' \cdot A_k(j)\) for all \(j \ge m''\).
    If we take \(m = \max\{m',m''\}\), then this becomes the statement that \(\log(j) \cdot A_k(1) \le m \cdot A_k(j)\) for all \(j \ge m\).
    If we substitute \(j = 2^l\) and \(d = 2^m\), then we obtain the desired \(l \cdot A_k(1) \le d \cdot A_k(2^l)\) for all \(l \ge d\).

    Putting these two together, for any \(k\), we can find a \(d \in \NN\) such that for any \(l \ge d\), 
    \[
        A_{k+2l}(1) \ge A_k(2^l) \ge (l/d) A_k(1)
    \]
    as desired.    
\end{proof}

We can now move on to the key lemma.

\begin{lemma}
    \label{lem:ack plus 1}
    For any integer \(c \ge 2\), there is a \(t \in \NN\) such that \(\alpha(c \cdot n) \le \alpha(n) + t\) for all \(n\).
\end{lemma}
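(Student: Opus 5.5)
Fix an integer \(c \ge 2\). The aim is to show that a bounded number of extra Ackermann levels is enough to absorb a multiplicative factor \(c\). With \(k = \alpha(n)\) we have \(A_k(1) \ge n\), and by the definition of \(\alpha\) it suffices to find \(t\), independent of \(n\) (hence of \(k\)), such that \(A_{k+t}(1) \ge c \cdot A_k(1) \ge c n\). Then \(\alpha(cn) \le k + t = \alpha(n) + t\).

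I do \emph{not} want to use \cref{lem:ackermann help} directly: its constant \(d\) depends on \(k\), and therefore so would \(t\). Instead I would use properties~\ref{ack prop 1}--\ref{ack prop 3} and the stronger inequality proved inside \cref{lem:ackermann help}, namely \(A_{k+2l}(m) \ge A_k(m 2^l)\) for all \(k, m, l\). Put \(m = 1\) and choose \(l\) with \(2^l \ge c\). Then, for all \(k\), it remains to show \(A_k(c) \ge c \cdot A_k(1)\), which needs the superadditive growth of \(A_k\).

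The key step is therefore to show \(A_k(j) \ge j \cdot A_k(1)\) for all \(j \ge 1\) and \(k \ge 1\); the case \(k=0\) is covered by spending one extra level, since \(A_1(j) = 2j+1\). I would argue by induction on \(k\). From the definition,
\[
A_k(j+1) = A_{k-1}^{j+2}(j+1) \ge A_{k-1}\bigl(A_{k-1}^{j+1}(j)\bigr) = A_{k-1}(A_k(j)),
\]
using monotonicity (property~\ref{ack prop 3}) and \(A_{k-1}(x) \ge x+1\). For \(k \ge 2\) we have \(A_{k-1}(x) \ge 2x\), so \(A_k(j+1) \ge 2A_k(j) \ge A_k(j) + A_k(1)\). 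Iterating gives \(A_k(j) \ge j A_k(1)\). For \(k = 1\) the claim \(2j+1 \ge 3j\) fails for \(j \ge 2\), but this only costs a constant shift: use \(A_{k+1}(1) \ge 2 A_k(1)\) and start from level \(k+1\) where needed. Altogether \(t = 2l + 2\) with \(2^l \ge c\) should work.

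The main obstacle is handling the small levels \(k \in \{0,1\}\) and the boundary cases (\(n \le 1\), where \(\alpha(n)=0\)) so that \(t\) is genuinely uniform in \(n\). I would simply absorb these cases by adding a fixed constant to \(t\), after checking them by hand.
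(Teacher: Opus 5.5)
Your proof is correct, and it differs from the paper's in a way that matters. The paper fixes \(k\) with \(A_k(1) \ge n\), takes \(d\) from \cref{lem:ackermann help}, and sets \(t = 2cd\). That lemma, however, only produces \(d\) for each fixed \(k\): its proof extracts \(d\) from the big-\(\bigO\) constant of property~4 at that particular \(k\). Since \(k\) was chosen depending on \(n\), the paper's closing claim that \(t\) does not depend on \(n\) is not justified as written, so your objection is well founded. You avoid the problem by combining the inequality \(A_{k+2l}(m) \ge A_k(m2^l)\), proved inside \cref{lem:ackermann help}, with the bound \(A_k(j) \ge j\,A_k(1)\) for \(k \ge 2\). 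That bound follows from \(A_k(j+1) \ge A_{k-1}(A_k(j)) \ge 2A_k(j)\), which is an induction on \(j\) rather than on \(k\) as you wrote. The result is an explicit constant \(t = 2l+2\) with \(2^l \ge c\), valid for all \(k\) at once. The paper's route is shorter and reuses its asymptotic lemma, but only your route delivers the uniformity the statement needs.

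Two clean-ups are worth making. First, the auxiliary claim \(A_{k+1}(1) \ge 2A_k(1)\) is false at \(k=0\), since \(A_1(1) = 3 < 4 = 2A_0(1)\). You do not need it: for every \(k\) you can write \(A_{k+2+2l}(1) \ge A_{k+2}(2^l) \ge A_{k+2}(c) \ge c\,A_{k+2}(1) \ge c\,A_k(1)\), using properties~2 and~3 and the fact that \(k+2 \ge 2\). Second, no boundary cases need checking by hand. Taking \(k = \alpha(n)\) already gives \(A_k(1) \ge n\) for every \(n\), including \(n \le 2\) where \(\alpha(n) = 0\).
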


\begin{proof}
    Unravelling definitions, we are trying to find a \(t\) such that 
    \[
        \alpha(c \cdot n) 
        = \min\{k \mid A_k(1) \ge c \cdot n\}
        \le \min\{k \mid A_k(1) \ge n\} + t
        = \alpha(n) + t
    \]
    for all \(n\), i.e., \(t\) does not depend on \(n\).
    Subtracting \(t\) from both sides, this inequality is equivalent to 
    \begin{equation}
        \label{eq:inexack}
        \min\{k \mid A_k(1) \ge c \cdot n\} - t
        \le \min\{k \mid A_k(1) \ge n\}
    \end{equation}
    By substitution, the left hand side of \eqref{eq:inexack} is equivalent to 
    \[
        \min\{k - t \mid A_{k}(1) \ge c \cdot n\}
        = \min\{k\mid A_{k+t}(1) \ge c \cdot n\}
    \]
    This tells us that, given that we have chosen \(t\) correctly, it suffices to show that for any \(k\) with \(A_k(1) \ge n\), there is a \(k' \le k\) such that \(A_{k' + t}(1) \ge c \cdot n\).

    Fix \(k \in \NN\) such that \(A_k(1) \ge n\). 
    If we choose \(k' = k\), then we need to find a \(t\) such that \(A_{k + t}(1) \ge c \cdot n\).
    By \cref{lem:ackermann help}, there is a number \(d\) such that for any \(l \ge d\), \(A_{k + 2l}(1) \ge (l/d) \cdot A_k(1)\).
    Let us choose \(t = 2cd\).
    Then \(c \ge 2\) means that \(cd \ge d\), and we therefore have
    \[
        A_{k'+t}(1)
        = A_{k+2cd}(1)
        \ge (cd/d) \cdot A_k(1)
        = c \cdot A_k(1)
        \ge c \cdot n
    \]
    Since \(t\) does not depend on \(n\), this is true for all \(n \in \NN\), as desired.
\end{proof}

We are now ready to prove \cref{thm:ackermann preservation}.

\begin{proof}
    [Proof of \cref{thm:ackermann preservation}.]
    Since \(f(\vec n) \in \bigO(g(\vec n))\), there is a constant \(c \ge 2\) and a number \(n_0\) such that for any \(\vec n\) with \(|\vec n| > n_0\) we have \(f(\vec n) \le c\cdot g(\vec n)\).
    Fix such a constant \(c\), and let \(t\) be the constant found in \cref{lem:ack plus 1} (which depends on \(c\)) that ensures \(\alpha(c\cdot n) \ge \alpha(n) + t\) for all \(n\).
    Since \(\alpha\) is monotone, we therefore have 
    \[
        \alpha(f(\vec n))
        \le \alpha(c \cdot g(\vec n))
        \le \alpha(g(\vec n)) + t
    \]
    It follows that \(\alpha(f(\vec n)) \in \bigO(\alpha(g(\vec n)))\).
\end{proof}

\fi 

\end{document}